%
%
%
%

\documentclass[runningheads,a4paper]{llncs}

\usepackage{ltl}
\usepackage{wrapfig}
\usepackage{booktabs}
\usepackage{tikz}
\usetikzlibrary{calc}
\usepackage{color}
\usepackage{scalefnt}
\usepackage{makecell}
\usepackage{array}

\usepackage{amssymb}
\setcounter{tocdepth}{3}
\usepackage{graphicx}
\usepackage[noend]{algorithmic}
\usepackage{algorithm2e}
\usepackage{stmaryrd}

\usepackage{url}
\urldef{\mailsa}\path|{alfred.hofmann, ursula.barth, ingrid.haas, frank.holzwarth,|
\urldef{\mailsb}\path|anna.kramer, leonie.kunz, christine.reiss, nicole.sator,|
\urldef{\mailsc}\path|erika.siebert-cole, peter.strasser, lncs}@springer.com|

\newcommand{\todo}[1]{}

\newcommand{\AP}{\textit{AP} }

\newcommand{\A}[0]{\mathcal A}

\newcommand{\nats}{\mathbb{N}}

\newcommand{\underapprox}[0]{\subseteq_n}
\newcommand{\overapprox}[0]{\supseteq_n}

\begin{document}

\mainmatter  

\title{Approximate Automata\\for Omega-regular Languages
\thanks{This work was partially supported by the German Research Foundation (DFG) as part of the Collaborative Research Center “Methods and Tools for Understanding and Controlling Privacy” (CRC 1223) and the Collaborative Research Center “Foundations of Perspicuous Software Systems” (TRR 248, 389792660), and by the European Research Council (ERC) Grant OSARES (No. 683300).}}

\titlerunning{}

%
%
\author{Rayna Dimitrova\inst{1} \and Bernd Finkbeiner\inst{2} \and Hazem Torfah\inst{2}}
\authorrunning{}

\institute{University of Leicester \and Saarland University}

%
%

\toctitle{Lecture Notes in Computer Science}
\tocauthor{Authors' Instructions}
\maketitle

\begin{abstract}
Automata over infinite words, also known as $\omega$-automata, play a key role in the verification and synthesis of reactive systems. The spectrum of $\omega$-automata is defined by two characteristics: the \emph{acceptance condition} (e.g. B\"uchi or parity) and the \emph{determinism} (e.g., deterministic or nondeterministic) of an automaton. These characteristics play a crucial role in applications of automata theory. For example, certain acceptance conditions can be handled more efficiently than others by dedicated tools and algorithms. Furthermore, some applications, such as synthesis and probabilistic model checking, require that properties are represented as some type of deterministic $\omega$-automata. However, properties cannot always be represented by automata with the desired acceptance condition and determinism.

In this paper we study the problem of approximating linear-time properties by automata in a given class.  Our approximation is based on preserving the language up to a user-defined precision given in terms of the size of the finite lasso representation of infinite executions that are preserved. We study the state complexity of different types of approximating automata, and provide constructions for the approximation within different automata classes, for example, for approximating a given automaton by one  with a simpler acceptance condition. 

\end{abstract}

\section{Introduction}\label{sec:intro}
The specification of linear-time properties is a key ingredient of all typical frameworks for the verification and synthesis of reactive systems. The application of both automata-theoretic and symbolic algorithms requires that specifications are translated to some kind of $\omega$-automata. 
Depending on the considered problem, or on the applied methods and tools, there are often constraints on the type of the resulting  automaton, that is, on its acceptance condition, and on whether it is deterministic or not. For example, while for model checking of non-stochastic  systems it suffices to consider nondeterministic B\"uchi automata, synthesis and  probabilistic model checking require deterministic automata (e.g., deterministic parity automata). Furthermore, it is often the case that efficient specialized methods and tools exist for specific classes of automata i.e., specific  acceptance conditions. For instance, efficient synthesis algorithms exist for the class GR(1) of linear-time temporal logic specifications~\cite{BLOEM2012911}, which defines properties that are expressible as deterministic parity automata with three colors. 

Finding an equivalent automaton with a simpler acceptance condition is not always possible. The canonical example is the property defined by the linear-time temporal logic (LTL) formula $\LTLfinally\LTLglobally p$, for which no deterministic B\"uchi automaton exists. A more interesting example is given by the LTL formula $\varphi=(\LTLglobally\LTLfinally p \rightarrow \LTLglobally\LTLfinally q) \wedge (\LTLglobally\LTLfinally r \rightarrow \LTLglobally\LTLfinally s)$, which requires that if the proposition $p$ holds infinitely often then the proposition $q$ should hold infinitely often as well, and the same for the propositions $r$ and $s$. Requirements of this form occur often in the synthesis of reactive systems, but the formula $\varphi$ cannot be represented by a deterministic parity automaton with three colors, and cannot be transformed to a formula in the efficient class of GR(1) specifications. Moreover, automata with simpler acceptance conditions can often be larger in size than automata with more general acceptance conditions. For instance, there are languages for which deterministic Streett automata are exponentially smaller than  nondeterministic B\"uchi automata~\cite{phd-safra}.

Motivated by this,  we study the problem of approximating linear-time properties (respectively $\omega$-automata) by automata in a given class (respectively automata from a given subclass). The choice of language approximation is inspired by applications in bounded model checking~\cite{bmc} and bounded synthesis~\cite{boundedSynthesis}. These methods are based on the observation that for finite-state systems, it
suffices to consider lasso-shaped executions of bounded size. Our approximation exploits the same idea for the construction and transformation
of automata. 
Furthermore, equivalent $\omega$-regular languages share the same set ultimately-periodic words~\cite{Calbrix:1993:UPW:645738.666444}, and thus lasso-shaped words of bounded size provide an approximation to this set of words, one that improves when considering larger bounds on the size of lassos. 

Given an $\omega$-language  $L$ and a bound $n \in \nats$, we consider the  language $L_n$ of the ultimately-periodic words in $L$ representable in the form $u\cdot v^\omega$, and where $|u\cdot v| \leq n$. That is, the language $L_n \subseteq L$ consists of the words in $L$ representable as \emph{lassos of length n or smaller}. We are then interested in approximations of $L$ that are precise with respect to the language $L_n$, termed \emph{$n$-lasso-precise approximations}.

We study the properties of $n$-lasso-precise approximations across the three dimensions of the complexity of the automata for such languages: size, acceptance condition, and determinism. More precisely, we establish worst case bounds, in terms of $n$, on the size of automata for $n$-lasso-precise approximations. We also show that we can approximate a parity automaton with $m$ colors by one with $m' < m$ colors, with at most polynomial increase in the size of the automaton. For example, considering the formula $\varphi$ above, if we underapproximate the language of $\varphi$ with a language that is precise with respect to the set of words representable by lassos of length $n$ for a fixed $n$, we can represent the resulting language by a safety automaton (a parity automaton with one color). Furthermore, if, for example, $n=2$ the resulting automaton has $4$ states, while the minimal deterministic parity automaton for the language of $\varphi$ has $95$ states and $10$ colors.
We also study the approximation of nondeterministic by deterministic automata, and show that the  worst-case exponential blow-up in the size is unavoidable for $n$-lasso-precise approximations.

As another example, consider the property described by the LTL formula $(\LTLfinally\LTLglobally p )\wedge(\LTLglobally\LTLfinally q) $, where $p$ and $q$ are some atomic propositions. This is a conjunction of a stability property and a liveness property, which is also not expressible in the fragment GR(1). We can approximate this property by an $n$-lasso-precise deterministic B\"uchi automaton, enabling the application of efficient synthesis tools. Most importantly, unlike existing approaches, our method is not limited to approximating liveness properties by safety properties, which benefits the precision of the approximation.

\bigskip
The paper is structured as follows. In Section~\ref{sec:preliminaries} we start with a short background on linear-time properties and $\omega$-automata. In Section~\ref{sec:approx} we introduce the notion of $n$-lasso-precise approximation of linear-time temporal properties, and present all relevant automata constructions for these approximations. Here, we establish property-independent upper and lower bounds on the size of $\omega$-automata for $n$-lasso-precise approximations, and study the  overhead in terms of size incurred when approximating an automaton by one with a simpler acceptance condition. In Section~\ref{sec:experiments} we show that the problem of computing lasso-precise automata of bounded size for properties given as LTL formulas is in $\Sigma^P_2$. In Section~\ref{sec:conclusion} we conclude our results with a discussion on our approach and its potential for the development of new verification and synthesis algorithms.

\paragraph{\bf Related Work.}
Our definition of bounded lasso-precise approximation is motivated by bounded model checking~\cite{bmc}, bounded synthesis~\cite{boundedSynthesis}, synthesis for bounded environments~\cite{DBLP:conf/cav/DimitrovaFT19}, and synthesis of approximate implementations \cite{BoundedSystemsAndEnvironments}.  We extend these ideas of focusing on small counterexamples, small implementations, or bounded-state environments, respectively, to the realm of specifications.  

The structural complexity of $\omega$--automata has been studied in~\cite{LPAR-22:Why_These_Automata_Types,KrishnanPB95}, where the acceptance conditions of deterministic automata are related to their complexity. Here, on the other hand, we study complexity questions in the context of language approximations.

There is a rich body of work on the minimization of B\"uchi automata. Typical approaches, such as~\cite{DBLP:conf/concur/EtessamiH00,DBLP:conf/forte/GiannakopoulouL02,DBLP:conf/cav/SomenziB00} are based on merging states according to      simulation and bisimulation relations. In~\cite{OnTheVirtueOfPatience} the authors propose a  SAT-solver based minimization method. All these approaches consider language equivalence, while in this paper we study language approximation.

Reducing the size of automata by language approximation has been studied in the context of  languages over finite words. The approach in~\cite{FixingTheStateBudget} fixes a bound in the number of the states of a deterministic finite automaton for a safety language, and computes an automaton within that bound that approximates the original language. In addition to the fact that their method applies to languages over finite words, the key difference to our work is that while their goal is to optimize precision within a state budget, we approximate automata with ones with simpler acceptance conditions that guarantees a desired precision. In  descriptive complexity, there is a related notion to our $n$-lasso  precision, which is the notion of the automaticity~\cite{automaticity} of a language  which is the size of the minimal automaton that is precise for that language on words of length up to a given bound $n$. As automaticity is defined for finite-word languages, $n$-lasso precision can be seen as lifting these ideas to $\omega$-languages. 

The approximation of $\omega$-regular properties by ones with simpler acceptance conditions has not been, to the best of our knowledge, systematically studied so far. Standard approaches, such as 
\cite{DBLP:journals/sttt/SchuppanB04,DBLP:journals/entcs/BiereAS02}, approximate  liveness and other temporal properties via safety properties. In contrast,  our approximation allows us to approximate temporal properties with other temporal properties that are not necessarily safety.

\section{Preliminaries}\label{sec:preliminaries}
\paragraph*{Linear-time Properties and Lassos.}
 A \emph{linear-time property} $\varphi$  over an alphabet~$\Sigma$ is a set of infinite words $\varphi \subseteq \Sigma ^\omega$. 
Elements of $\varphi$ are called \emph{models} of $\varphi$.
A \emph{lasso of length $n$} over an alphabet $\Sigma$ is a pair $(u,v)$ of finite words $u\in \Sigma^{*}$ and $v \in \Sigma^{+}$ with  $|u\cdot v|~=n$ that induces the ultimately-periodic word  $u \cdot v^\omega$. We call $u\cdot v$ the \emph{base} of the lasso or ultimately-periodic word, and $n$ the \emph{length of the lasso}. The set $\mathit{Bases}(\varphi,n)$ is the set of bases of lassos of length $n$ that induce words that are models of $\varphi$. 

For a bound $n \in \nats$, we define the language $L_n(\varphi)=\{\sigma \in \Sigma^\omega\mid \exists u\cdot v \in \mathit{Bases}(\varphi,n).~ \sigma=u\cdot v^\omega \}$
as the language of models of $\varphi$ that can be represented by lassos of length $n$. 
We call the elements of $L_n(\varphi)$ the \emph{$n$-models} of $\varphi$.

If a finite word $w\in \Sigma^*$ is a prefix of a word $\sigma \in \Sigma^*\cup\Sigma^\omega$, we write $w\preceq \sigma$. For a language $L \subseteq \Sigma^* \cup \Sigma^\omega$, we define  $\mathit{Prefix}(L) = \{w \in \Sigma^* \mid \exists \sigma \in L: w \preceq \sigma\}$ as the set of all finite words that are prefixes of words in the language $L$.
For a word $w = \alpha_1\alpha_2\ldots\alpha_n\in \Sigma^*$ we define $w(i) = \alpha_i$ for each $i \in \{1,\ldots, n\}$.

\paragraph*{Automata Over Infinite Words.}
A \emph{nondeterministic parity automaton} over an alphabet $\Sigma$ is a tuple $\mathcal{A} =
(Q,Q_0,\delta,\mu)$, where
$Q$ denotes a finite set of states, $Q_0 \subseteq Q$ denotes a set of
initial states, $\delta: Q \times \Sigma \rightarrow
\mathcal{P}(Q)$ denotes a transition function that maps a state and an input letter to a
set of states, and $\mu: Q
\rightarrow C \subset \mathbb{N}$ is a coloring function with a finite set of colors $C$. 

A \emph{run} of $\mathcal{A} = (Q,Q_0,\delta,\mu)$ on an infinite word $\sigma = \alpha_1\alpha_2\dots \in \Sigma^\omega$ is an infinite sequence $\rho=q_0 q_1 q_2\ldots\in Q^\omega$ of states such that $q_0 \in Q_0$, and for every $i \in \nats$ it holds that $q_{i+1} \in \delta(q_i,\alpha_{i+1})$. A run $\rho=q_0q_1q_2\ldots$ is \emph{accepting} if it satisfies the \emph{parity condition}, which requires that the highest number occurring
infinitely often in the sequence $\mu(q_0)\mu(q_1)\mu(q_2)\dots\in C^\omega$ is even.  An infinite word $\sigma$ is \emph{accepted} by an automaton $\A$ if there exists an accepting run of $\A$ on $\sigma$.
The set of infinite words accepted by an automaton $\mathcal{A}$ is
called its \emph{language} $L(\mathcal{A})$. 

We say that a run $\rho$ \emph{has size $n \in \nats$} if $\rho$ is an ultimately-periodic run and $n$ is the smallest natural number such that $\rho = \rho_1\cdot(\rho_2)^\omega$ and  $|\rho_1\cdot\rho_2| = n$.

An automaton is  \emph{deterministic} if $|Q_0| = 1$, and for all states $q$ and input letters $\alpha$, $|\delta(q,\alpha)| \leq 1$. For a deterministic automaton we will see $\delta$ as a partial function $\delta: Q \times \Sigma \rightarrow Q$. We use the notation $\delta(q,\alpha)=\emptyset$ to denote that state $q$ has no successor for the letter $\alpha$.
We define the size $|\A|$ of an automaton $\A$ to be the number of its states, i.e., $|\A| = |Q|$.

A parity automaton is called a \emph{B\"uchi automaton} if and only if the image of
$\mu$ is contained in $\{1,2\}$, 
and a \emph{safety}
automaton if the image of $\mu$ is $\{0\}$.  B\"uchi 
automata are denoted by $(Q,Q_0,\delta,F)$, where
$F\subseteq Q$ denotes the states with the higher color. Safety
automata are denoted by $(Q,Q_0,\delta)$. A run 
of a B\"uchi automaton is thus accepting, if it contains infinitely many visits to $F$.
For safety automata, every infinite run is accepting.

We define an \emph{automaton type} to indicate whether the automaton is deterministic or nondeteministic, and its acceptance condition. We abbreviate deterministic as D  and nondeterministic as  N. For the acceptance conditions 
we use the abbreviations P (parity) and B (B\"uchi). Thus, for example, DPA stands for deterministic parity automaton, while NBA stands for Nondeterministic B\"uchi automaton.

\section{Lasso-precise Approximations of Linear-time Properties}\label{sec:approx}
We begin this section with a formal definition of the approximation of linear-time properties discussed in the introduction. More precisely, we introduce the notion of \emph{lasso-precise under-} and \emph{overapproximation} of a linear-time property~$\varphi$ for a given bound $n \in \nats$, in which we underapproximate (overapproximate) $\varphi$ with a linear-time property that has the same $n$-models as  $\varphi$. That is, the approximation is precise for $n$-models.

\subsection{Lasso-precise Approximations}

\begin{definition}[Lasso-precise Underapproximation]
For a bound $n \in \nats$, we say that a linear-time property $\varphi'$ is an $n$-lasso-precise underapproximation of a linear-time property $\varphi$, denoted $\varphi' \underapprox \varphi$, if $\varphi'\subseteq \varphi$ and $L_n(\varphi') = L_n(\varphi)$.
\end{definition}

\begin{definition}[Lasso-precise Overapproximation]
For a bound $n \in \nats$, we say that a linear-time property $\varphi'$ is an $n$-lasso-precise overapproximation of a linear-time property $\varphi$, denoted $\varphi' \overapprox \varphi$, if $\varphi'\supseteq \varphi$ and $L_n(\varphi') = L_n(\varphi)$. 
\end{definition}

In the rest of the paper we focus on underapproximations. All the results extend easily to lasso-precise overapproximations. In fact, if we have also the complement language of $\varphi$, an $n$-lasso-precise overapproximation of a  property $\varphi$ can be computed by computing an $n$-lasso-precise underapproximation of the complement of $\varphi$.

In the next sections we show how to construct automata for $n$-lasso-precise approximations of linear-time properties. For a property $\varphi$ the automata will recognize the language $L_n(\varphi)$.  This language includes also all words in $\varphi$ that are representable by a lasso of size $n'\leq n$, a fact that we establish with the next lemma.  
\begin{lemma}
	For any linear-time property $\varphi$ and bounds $n,n'\in \nats$, we have that $L_n(\varphi) \subseteq L_{n'}(\varphi)$, if  $n\leq n'$. 
\end{lemma}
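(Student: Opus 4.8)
The plan is to show that any word $\sigma \in L_n(\varphi)$ also belongs to $L_{n'}(\varphi)$ whenever $n \le n'$, by exhibiting a lasso of length exactly $n'$ (or at most $n'$, depending on how $\mathit{Bases}$ treats shorter representations) that induces $\sigma$ and whose induced word is a model of $\varphi$. First I would unfold the definitions: if $\sigma \in L_n(\varphi)$, then there is some $u\cdot v \in \mathit{Bases}(\varphi,n)$ with $\sigma = u\cdot v^\omega$, $|u\cdot v| = n$, $v \neq \varepsilon$, and $u\cdot v^\omega \in \varphi$. The task is then to re-represent the same ultimately-periodic word $\sigma$ as a lasso of length $n'$.

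The key construction is a ``padding'' or ``unrolling'' argument on the periodic part. Given the lasso $(u,v)$ of length $n$, I would consider the lasso $(u', v')$ where we keep the period but make the representation longer in a way that does not change the induced word. The cleanest choice: let $k = n' - n \ge 0$, write $v = \beta_1 \beta_2 \cdots \beta_m$ (so $m = |v| \ge 1$), and set $u' = u \cdot \beta_1 \beta_2 \cdots \beta_j$ and $v' = \beta_{j+1}\cdots \beta_m \beta_1 \cdots \beta_j$ for an appropriate rotation — but actually the simplest is to not rotate at all and instead pad the stem: take $v$ and replace it by $v$ repeated enough times, i.e. $v'' = v^{t}$ for suitable $t$, together with prepending a prefix of $v$ to $u$. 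Concretely, write $n' - n = qm + r$ with $0 \le r < m$; set $u' = u \cdot v^q \cdot (\beta_1\cdots\beta_r)$ and $v' = (\beta_{r+1}\cdots\beta_m)(\beta_1\cdots\beta_r) = $ a cyclic rotation of $v$. Then $|u'\cdot v'| = (n - m) + qm + r + m = n + qm + r = n'$, $v' \neq \varepsilon$ since $m \ge 1$, and one checks $u'\cdot (v')^\omega = u \cdot v^\omega = \sigma$, because appending full copies of $v$ to the stem and cyclically rotating the period leaves the infinite word unchanged. Hence $u'\cdot v' \in \mathit{Bases}(\varphi,n')$ and $\sigma \in L_{n'}(\varphi)$.

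The main obstacle — really just a bookkeeping point — is handling the edge cases of the rotation and the division with remainder correctly, in particular making sure $v'$ stays nonempty (guaranteed by $|v| \ge 1$) and that the length arithmetic comes out to exactly $n'$ rather than ``at most $n'$''; if the paper's definition of $L_{n'}$ via $\mathit{Bases}(\varphi,n')$ requires equality $|u'\cdot v'| = n'$ then the remainder term $r$ must be threaded through carefully, whereas if ``$\le n'$'' were allowed the proof would be a one-liner. I would also note the degenerate case $n = n'$, where there is nothing to prove, and the case where $\varphi$ has no $n$-models, where $L_n(\varphi) = \emptyset$ and the inclusion is trivial. Finally, I would remark that the induced-word equality $u' (v')^\omega = u v^\omega$ is the only genuinely ``mathematical'' step and follows from the standard fact that for any ultimately-periodic word, prepending whole periods to the stem and cyclically shifting the period yields the same word; this can be verified by comparing the two words position-by-position, or by appealing to the characterization of ultimately-periodic words.
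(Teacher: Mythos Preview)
Your proposal is correct and follows essentially the same approach as the paper: the paper's proof is a one-line sketch (``Every lasso of length $n$ can be unrolled to a lasso of length $n'$ by unrolling the loop $n'-n$ times''), and your construction with the division $n'-n = qm + r$ and the cyclic rotation of $v$ is precisely the detailed realization of that unrolling. In fact your version is more careful than the paper's, since ``unrolling the loop $n'-n$ times'' only literally lands at length $n'$ when one unrolls one letter at a time, which is exactly your rotation step.
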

\begin{proof}
	Every lasso of length $n$ can be unrolled to a lasso of length $n'$ by unrolling the loop $n'-n$ times. \qed
\end{proof}

\subsection{The Size of Lasso-precise Automata for Linear-time Properties} 

Since for any $\varphi$ the language 
$L_n(\varphi)$ is a safety language, we can always construct a deterministic safety automaton that is $n$-lasso-precise. In the following  we provide a construction which yields a deterministic safety automaton for a language $L_n(\varphi)$, and establish a lower bound on the size of an automaton for $L_n(\varphi)$.

\begin{theorem}[Safety automata for $n$-lasso-precise approximations]
	For every linear-time property $\varphi$ over an alphabet $\Sigma$ and a bound $n\in \nats$, there is a deterministic safety automaton $\mathcal A$ of size $O(|\Sigma|^n \cdot 2^{n\log n})$, such that $L(\A) \underapprox \varphi$.
        \label{thm:safety}
\end{theorem}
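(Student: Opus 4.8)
The plan is to build the automaton directly from an explicit description of $L_n(\varphi)$. For $x\in\Sigma^n$ and a pair $(i,p)$ with $i\ge 0$, $p\ge 1$ and $i+p\le n$, call $(i,p)$ \emph{valid for $x$} if $x(j)=x(j+p)$ for all $j$ with $i+1\le j\le n-p$, and in that case put $w(x,i,p) := x(1)\cdots x(i)\cdot\big(x(i+1)\cdots x(i+p)\big)^\omega$. Unfolding the definitions, a word $\sigma$ lies in $L_n(\varphi)$ iff, writing $x=\sigma(1)\cdots\sigma(n)$, there is a pair $(i,p)$ valid for $x$ with $\sigma = w(x,i,p)$ and $w(x,i,p)\in\varphi$; hence $L_n(\varphi)=\bigcup_{x\in\Sigma^n}W(x)$, where $W(x):=\{\,w(x,i,p)\mid (i,p)\text{ valid for }x,\ w(x,i,p)\in\varphi\,\}$ is finite, of size $O(n^2)$. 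I would also record the elementary periodicity fact (via Fine--Wilf): the length-$3n$ prefix of any word determines the at most one lasso word of base length $\le n$ extending it. This reconfirms that $L_n(\varphi)$ is a safety language and is the key to keeping $\A$ small.

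The automaton $\A$ runs in two stages. In the first stage, lasting $n$ letters, the state is the prefix read so far; a prefix $w$ with $|w|<n$ is kept only if $w\in\mathit{Prefix}(L_n(\varphi))$ (otherwise the transition is undefined, so the word is rejected), and after $n$ letters $\A$ has identified $x\in\Sigma^n$ and enters the second stage. In the second stage $\A$ maintains the set of still-consistent pairs: it starts from $\{(i,p)\mid (i,p)\text{ valid for }x,\ w(x,i,p)\in\varphi\}$, and on each further letter it discards every $(i,p)$ whose periodic prediction for the current position disagrees with that letter, rejecting once the set becomes empty. By the periodicity fact, after at most $3n$ further letters all surviving pairs describe one and the same word $w(x,i,p)$, and from then on $\A$ just follows its period-$p$ cycle. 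Thus every infinite run stabilises on the cycle of some surviving $w(x,i,p)$, and its word is exactly that $w(x,i,p)\in W(x)$; conversely every word of $\bigcup_x W(x)$ has such a run. Hence $L(\A)=\bigcup_x W(x)=L_n(\varphi)$, so $L(\A)\subseteq\varphi$, and since $L_n$ is monotone with $L_n(L_n(\varphi))=L_n(\varphi)$ we get $L_n(L(\A))=L_n(\varphi)$; that is, $L(\A)\underapprox\varphi$. (No acceptance condition needs checking, since every infinite run of a safety automaton accepts.)

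It remains to bound $|\A|$. The first-stage states are prefixes of length $<n$, at most $\sum_{k=0}^{n-1}|\Sigma|^k = O(|\Sigma|^n)$ of them. A second-stage state is determined by the length-$n$ window $x$ (at most $|\Sigma|^n$ choices) plus the bookkeeping after $x$: the set of still-consistent pairs among the $O(n^2)$ valid ones, together with the current position, which ranges over at most $3n$ values during disambiguation and then over a cycle of length $\le n$. Even a crude count bounds the second-stage states by $2^{O(n\log n)}$ per window $x$ (and, using the periodicity fact, only polynomially many configurations per $x$ are in fact reachable). Altogether $|\A| = O(|\Sigma|^n) + |\Sigma|^n\cdot 2^{O(n\log n)} = O\big(|\Sigma|^n\cdot 2^{n\log n}\big)$.

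The main obstacle is the size bound, not correctness. A naive second stage that stores the raw prefix up to length $3n$, or that tracks the consistent-pair set together with an offset modulo the least common multiple of the surviving periods, blows up to roughly $|\Sigma|^{3n}$, respectively $2^{\Theta(n^2)}$, per window; meeting the stated bound requires precisely the Fine--Wilf observation that the lasso structure of a word is pinned down after a linear-length prefix, so that the second stage reduces to a short disambiguation phase followed by a single period-$\le n$ cycle.
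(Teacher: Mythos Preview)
Your two-phase layout---store the length-$n$ prefix $x$, then track which lasso candidates remain consistent with the subsequent input---is the same plan as the paper's. The second-phase bookkeeping differs: the paper keeps, for each of the $n$ loop-start positions $\ell\in\{1,\dots,n\}$, a pointer $t_\ell\in\{-,1,\dots,n\}$ into $x$ indicating the next letter expected if the loop starts at $\ell$; the state set $\Sigma^n\times\{-,1,\dots,n\}^n$ then gives the bound $|\Sigma|^n\cdot(n+1)^n=|\Sigma|^n\cdot 2^{O(n\log n)}$ by direct counting, with no periodicity lemma needed. You instead store the absolute position together with the set $S$ of surviving pairs $(i,p)$, and invoke Fine--Wilf to cap the disambiguation phase at $O(n)$ steps before a single period-$\le n$ cycle remains. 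The correctness part of your argument is sound, and the Fine--Wilf bound you use is correct.

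The gap is in the size count. With $S$ ranging over subsets of the $O(n^2)$ valid pairs and the position over $O(n)$ values, the crude count is $2^{O(n^2)}$ per window $x$, not $2^{O(n\log n)}$; the sentence ``even a crude count bounds the second-stage states by $2^{O(n\log n)}$'' is not supported by the state description you give, and $2^{O(n^2)}$ does not meet the theorem's bound. Your parenthetical that only polynomially many configurations are reachable is in fact true and is exactly what is needed, but you do not prove it. The missing argument is short: for fixed $x$ and position $m$, any nonempty reachable $S$ consists of all valid pairs whose associated word agrees with the input on the first $m$ letters; any one element of $S$ therefore determines that length-$m$ prefix and hence determines $S$, so there are at most $|W(x)|+1\le n+1$ reachable values of $S$ for each $m$. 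With this argument supplied, your construction actually yields $|\Sigma|^n\cdot\mathrm{poly}(n)$ reachable states, improving on the paper's bound; as written, however, the only size bound you have established is $|\Sigma|^n\cdot 2^{O(n^2)}$.
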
 
\paragraph*{Idea \& Construction.} The automaton $\A$  accepts a word in two phases. 
The states used in the first phase are of the form $w \cdot \#^{m-1} \in (\Sigma \cup \{\#\})^n$, where $w$ is the portion of the prefix of length $n$ of the input word that has been read so far.
 In this phase, the automaton reads the prefix of length $n$ and stores it in the automaton state. Once the whole prefix is read, it checks whether the prefix of length $n$ is in $\mathit{Bases}(\varphi,n)$. If this is the case, then it transitions to the second phase, and checks if the word being read is an $n$-lasso, with this base.  

The states in the second phase are of the form 
$(w,(t_1,\ldots,t_n)) \in  \Sigma^n\times \{-,1,\dots,n\}^n$ , where $w \in \Sigma^n$ is the prefix read in the first phase, and $(t_1,\ldots,t_n)$ are indices of letters in $w$, whose role is explained below. To check that the word is an $n$-lasso, the automaton has to check if for some $\ell \in \{1,\ldots,n \}$ the input word is of the form $w(1)\ldots w(\ell-1)(w(\ell)\ldots w(n))^\omega$, that is, there is an $\ell$ which is a loop start position. To this end, the automaton tracks the possible loop start positions, starting with all positions, and for each new letter $\alpha$ it eliminates those positions that are not compatible with $\alpha$. 
More precisely, if the automaton reads a letter $\alpha$ in state $(w,(t_1,\ldots,t_n))$, it uses each $t_i$ to check whether the loop can start in position $i$ of $w$. 
Intuitively, $t_i$ is a position in $w$ that points to the letter that has to be read  next in order for $i$ to still be a possible loop start position. If the next letter $\alpha$ is not the same as $w(t_i)$, then $i$ cannot be a loop start position, and $t_i$ is eliminated by replacing it by $-$. Otherwise, $t_i$ is incremented, or set back to the loop start $i$ if the end of $w$ is reached.
A run of $\A$ is accepting if it never reaches a state $(w,(-,\ldots,-))$, that is,  a state in which each position is no longer a possible start of a loop.

Formally, the states of the automaton are given by $\mathcal A= (Q,\{q_0\},\delta)$ where: 
\begin{itemize}
	\item $Q= Q_1 \cup Q_2$, where $Q_1 = (\Sigma \cup \{\#\})^n$ and 
	$Q_2 =  \Sigma^n\times \{-,1,\dots,n\}^n$	

	\item In the initial state no letter has been read: $q_0= \#^n$.
	\item The transition relation $\delta$ is defined as follows.
	\begin{itemize}
		\item In the first phase if we are at a state $q=w\cdot \#^m$ for some  $1<m\leq n$ and $w \in \Sigma^{n-m}$, then $$\delta(q,\alpha)= w\cdot\alpha\cdot \#^{m-1} $$
		\item In the transition between the first and the second phase, which happens once the prefix of length $n$ has been read, and when we  are at a state   $q = w\cdot \#$ for some  $w \in \Sigma^{n-1}$ the transition is given by 
		$$\delta(q,\alpha) = (w\cdot\alpha,(t_1,\dots,t_n))$$
		 where 
		$$t_i = \begin{cases}
 												i &  w(1)\dots (w(i)\dots w(n))^\omega  \in \varphi \\
 												- & \text{ otherwise}\\
 											\end{cases}$$
 											Note that determining the successor state in this case requires checking if a given word is in $\varphi$. 
 			Initially, only loop start positions  $i$ for which
$w(1)\dots (w(i)\dots w(n))^\omega \in \varphi$
are allowed, so the second phase starts with state  $(w,(t_1,\ldots,t_n))$, in which each pointer $t_i$ points to the start of the corresponding loop if $w(1)\dots (w(i)\dots w(n))^\omega \in \varphi$, and is set to $-$ otherwise.

 		\item In the second phase, for a state $q= (w,(t_1,\dots,t_n))$ with $w \in \Sigma^n$ and where there exists $i\leq n$ with $t_i\not = - $, the transition for such a state is given by 
$$\delta(q,\alpha) = (w,(t'_1,\dots,t'_n)) $$ 
where 
$$t'_i = \begin{cases}
												- &  t_i = - \\
												  &  \text{or }w(t_i) \not = \alpha\\
												  &\\
												t_i+1 & t_i < n ~\wedge~ w(t_i)=\alpha\\
												&\\
												i & t_i = n ~\wedge~ w(t_i)=\alpha
											 \end{cases}	$$

Here we track valid loop start position as follows. If $\alpha\neq w(t_i)$, then the loop start $i$ is eliminated by replacing $t_i$ by $-$. Otherwise, we move the pointer one step to the right by incrementing $t_i$. In case $t_i$ is equal to $n$, i.e., at the end of the lasso, $t_i$ is reset to the corresponding loop start position $i$.
\item If only $-$ remain in the tuple $(t_1,\dots,t_n)$, the automaton rejects
	$$\delta((w,(-,\dots,-)),\alpha) = \emptyset$$
	for any $w \in \Sigma^n$.
	\end{itemize}
\end{itemize}

The number of states in $Q_1$ is $(|\Sigma|+1)^n$, and for $Q_2$ it is $|\Sigma|^n \cdot (n+1)^n$. 
	\qed
\vskip 0.2cm

The number of states of the deterministic safety automaton defined above is exponential in the parameter $n$ on the length of the lassos for which the approximation should be precise. In the next theorem we exhibit a family of linear-time properties for which this exponent is unavoidable, that is, the minimal $n$-lasso-precise NPA has size exponential in $n$.

\begin{theorem}
	There is a family of linear-time properties $\varphi_n$ for $n \in \nats$ over an alphabet $\Sigma$, such that, every parity automaton that is $n$-lasso-precise for $\varphi_n$ has at least $|\Sigma|^n$ states. \looseness=-1 
\end{theorem}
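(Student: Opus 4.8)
The plan is to exhibit, for each $n$, a linear-time property $\varphi_n$ whose set of $n$-models $L_n(\varphi_n)$ has a ``fooling set'' of size $|\Sigma|^n$ that any parity automaton — deterministic or not — must separate with distinct states. The natural candidate is to encode the set of all lassos whose base has a fixed prescribed shape into the language: for instance, over an alphabet $\Sigma$ with $|\Sigma| \geq 2$, let $\varphi_n$ consist of exactly those ultimately-periodic words $w\cdot w^\omega$ where $w \in \Sigma^n$ (i.e. the loop is the entire word and the prefix is empty), together with all their unrollings. Then $L_n(\varphi_n)$ contains precisely the $|\Sigma|^n$ words $w^\omega$ for $w \in \Sigma^n$, plus whatever words of the form $u \cdot v^\omega$ with $|uv| \le n$ are forced to be included; one must choose $\varphi_n$ carefully so that $L_n(\varphi_n)$ is exactly (or essentially) $\{w^\omega : w \in \Sigma^n\}$ and these words are pairwise ``hard to distinguish'' in a suitable sense.

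First I would pin down $\varphi_n$ and compute $L_n(\varphi_n)$ exactly, using the fact (Lemma~1) that $L_n$ is downward closed in the bound, so I need only describe which lassos of length $\leq n$ produce models. Second, I would set up the combinatorial core: take any parity automaton $\A$ that is $n$-lasso-precise for $\varphi_n$, so $L(\A) \supseteq L_n(\varphi_n) = \{w^\omega : w \in \Sigma^n\}$ (the $\supseteq$ is all we need from precision here, since underapproximations satisfy $L(\A) \subseteq \varphi_n$ and $L_n(L(\A)) = L_n(\varphi_n)$). For each $w \in \Sigma^n$ fix an accepting run of $\A$ on $w^\omega$; since the run on the periodic word $w^\omega$ is on a word of period $n$, I can extract from the suffix of this accepting run a state $q_w$ that occurs at a position $\equiv 0 \pmod n$ and lies on the cycle that carries the run's accepting behavior — formally, a state $q_w$ reachable on a prefix $w^k$ for some $k$, from which $w^n$ loops back to $q_w$ along a path whose maximal color is even. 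Third, I would argue $w \neq w'$ implies $q_w \neq q_{w'}$: if $q_w = q_{w'}$, then by splicing the loop of $w$ and the loop of $w'$ one builds an accepting run on a word like $w^k (w^n w'^n)^\omega$ or, more directly, on $w^j (w')^\omega$ type mixtures — a word that is NOT of the form $u v^\omega$ with $|uv|\le n$ unless it equals a genuine $n$-model, yet can be made to be accepted while being a word that must lie in $\varphi_n \subseteq$ (some set not containing it), contradicting $L(\A) \underapprox \varphi_n$. The cleanest version: from $q_w = q_{w'}$ we get that $w^k (w'^n)^\omega$ is accepted; if $w, w'$ are chosen so this word's unique ultimately-periodic representation has base length $\le n$ only when $w' $ is a rotation/power of $w$, we derive a contradiction, forcing all $q_w$ distinct and hence $|\A| \ge |\Sigma|^n$.

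The main obstacle is the third step — the fooling argument must work for \emph{nondeterministic} parity automata, where a single word has many runs, and it must respect the delicate two-sided constraint of $n$-lasso-precision (not just $L(\A)\supseteq L_n$, but also $L(\A)\subseteq\varphi_n$, which restricts what spurious words $\A$ may accept). I expect the resolution is to choose $\varphi_n$ so that $\varphi_n$ itself is ``sparse'' — e.g. $\varphi_n = \{ w^\omega : w \in \Sigma^n\}$ exactly, with no other models at all. Then $L(\A) \subseteq \varphi_n$ forces $\A$ to accept \emph{only} words of the form $w^\omega$, $w\in\Sigma^n$; splicing two accepting runs through a shared state $q_w = q_{w'}$ yields an accepting run on a word that is periodic with period dividing $\mathrm{lcm}$ of the two cycle lengths but is not any single $w^\omega$ with $|w|=n$ unless $w$ and $w'$ generate the same period-$n$ word — contradiction. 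Care is needed because the cycles realizing the two accepting runs need not have length exactly $n$ (they divide into period-$n$ behavior but the state-cycle could be longer); handling this requires choosing the representative state $q_w$ on a position that is a common multiple of $n$ and the run's eventual period, which is a routine pumping argument once the setup is fixed.
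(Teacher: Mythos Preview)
Your approach is the paper's: it too takes $\varphi_n = \{w^\omega : w \in \Sigma^n\}$, observes that any $n$-lasso-precise underapproximation of $\varphi_n$ must in fact equal $\varphi_n$, and then runs a fooling/splicing argument against an arbitrary NPA for $\varphi_n$. The cycle-alignment worries in your last paragraph are unnecessary, though: simply record, for each $w\in\Sigma^n$, the state $q_w$ at position exactly $n$ on some fixed accepting run on $w^\omega$; if $q_w = q_{w'}$ with $w\neq w'$, concatenating the first run up to position $n$ with the second run from position $n$ onward yields a valid run on $w\cdot (w')^\omega$ (the suffix of $(w')^\omega$ starting at position $n$ is again $(w')^\omega$), and it is accepting because parity acceptance depends only on the tail --- no reference to the run's own eventual period or to an ``accepting cycle'' is needed.
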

\begin{proof}
Let $\Sigma$ be an alphabet. We define  $\varphi_n = \{\sigma^\omega \mid  \sigma\in \Sigma^n\}$ for  $n \in \nats$. We  show that the family $\varphi_n$ of linear-time properties has the required properties.

Fix $n \in \nats$, and consider the language $ \varphi_n $. By definition of $\varphi_n$, every lasso-precise automaton for $\varphi_n$ for the bound $n$ is in fact an automaton for $\varphi_n$. Let $\mathcal A = (Q,Q_0,\delta,\mu)$ be a nondeterministic parity automaton for $\varphi_n$.  
For each  $\sigma^\omega \in \varphi_n$ there exists  at least one accepting run $\rho=q_0 q_1 q_2,\ldots$ of $\mathcal A$ on $\sigma^\omega$. We denote with $q(\rho,n)$ the state $q_n$ that appears at the position indexed $n$ of a run $\rho$.  Let us define the set
\[Q_n  =\{q(\rho,n) \mid  \exists \sigma^\omega \in \varphi_n:\; \rho \text{ is an accepting run of }\mathcal A \text{ on } \sigma^\omega \}.\]
That is, $Q_n $ consists of the states that  appear at position $n$ on some accepting run on some word from $\varphi_n$. We will show that  $|Q_n| \geq |\Sigma|^n$. 

Assume that this does not hold, that is,  $|Q_n| < |\Sigma|^n$. Since $|\varphi_n| = |\Sigma|^n $, this implies that there exist  $\sigma_1,\sigma_2 \in \Sigma^n$, such that $\sigma_1 \neq \sigma_2$ and there exists accepting runs $\rho_1$ and $\rho_2$ of $\mathcal A$ on $\sigma_1^\omega$ and $\sigma_2^\omega$ respectively, such that $q(\rho_1,n)  = q(\rho_2,n)$. 
That is, since we assumed that the number of states in $Q_n$ is smaller than the number of words in $\varphi_n$, there must be two different words who have accepting runs visiting the same state at position $n$. We now construct a run $\rho_{1,2}$ that follows $\rho_1$ for the first $n$ steps, ending in state $q(\rho_1,n)$, and from there on follows $\rho_2$. It is easy to see that $\rho_{1,2}$ is a run on the word $\sigma_1 \cdot \sigma_2^\omega$. It is accepting, since $\rho_2$ is accepting. This is a contradiction, since $\sigma_1 \cdot \sigma_2^\omega \not\in  L(\mathcal A)$ as $\sigma_1 \neq \sigma_2$.

Thus, we have shown that $|Q| \geq  |Q_n| \geq |\Sigma|^n$. Since $\mathcal A$ was an arbitrary NPA for $\varphi_n$, this implies that the minimal NPA for $\varphi_n$ has at least $|\Sigma|^n$ states.\qed
\end{proof}

In the theorems above we established an upper and a lower bound on the size of automata for $n$-lasso-precise approximations. These bounds are independent of the way the original language is represented. If a language $L$ is given as an $\omega$-automaton, this automaton is clearly an automaton for the most precise $n$-lasso-precise underapproximation of $L$. In practice, however, we might be interested in finding a smaller/minimal automaton of the same type for an $n$-lasso-precise approximation of $L$. Note that the minimal $n$-lasso-precise automaton of the same type will never be larger than the given automaton.

\subsection{Lasso-precise Approximations with Simpler Acceptance Conditions}
We now turn to establishing the upper bounds for approximating B\"uchi automata with safety automata, and, more generally, approximating parity automata with parity automata with fewer colors.
More precisely, we present constructions for approximating linear-time properties with automata with certain acceptance conditions and show that the size of the constructed automaton is polynomial in the size of an automaton for the original property.
\begin{theorem}[Approximating B\"uchi automata by safety automata]\label{thm:buechi-to-safety}
	For every (deterministic or nondeterministic) B\"uchi automaton $\A=(Q,Q_0,\delta,F)$  and a bound $n\in \nats$, there is a (deterministic or nondeterministic, respectively) safety automaton $\A'$ with $n \cdot |Q\setminus F|^2+|F|$ states, such that, $L(\A') \underapprox L(\A)$.
\end{theorem}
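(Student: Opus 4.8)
The plan is to let $\A'$ simulate $\A$ while keeping, in its state, a bounded counter that records how many states outside $F$ have been visited since the last visit to $F$, rejecting as soon as this counter would exceed $N := n\cdot|Q\setminus F|$. Concretely, $\A'$ has state set $F \,\dotcup\, \big(\{1,\dots,N\}\times(Q\setminus F)\big)$, which has exactly $|F| + n\cdot|Q\setminus F|^2$ elements. A pair $(c,q)$ means $\A$ is in $q\notin F$ and has seen $c$ states outside $F$ since the last visit to $F$ (or since the start of the run); a state $q\in F$ records a visit to $F$ and ``resets'' the counter. The initial states of $\A'$ are those of $Q_0$ lying in $F$, together with $(1,q_0)$ for every $q_0\in Q_0\setminus F$. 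On letter $\alpha$, from a state with ``$\A$-component'' $q$, and for each $q'\in\delta(q,\alpha)$: go to $q'$ if $q'\in F$; to $(1,q')$ if $q'\notin F$ and the source lies in $F$; to $(c{+}1,q')$ if $q'\notin F$, the source is $(c,q)$ and $c<N$; and have no $\alpha$-successor on this branch if $q'\notin F$ and $c=N$. The transition function of $\A'$ is a partial function whenever that of $\A$ is, so $\A'$ is deterministic exactly when $\A$ is.

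For $L(\A')\subseteq L(\A)$: if $\sigma\in L(\A')$ then, $\A'$ being a safety automaton, there is an infinite run of $\A'$ on $\sigma$; taking $\A$-components yields an infinite run of $\A$ on $\sigma$. Since the counter never exceeds $N$ and is reset exactly at visits to $F$, this run has no block of more than $N$ consecutive states outside $F$, hence visits $F$ infinitely often and is accepting, so $\sigma\in L(\A)$. In particular $L_n(L(\A'))\subseteq L_n(L(\A))$.

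It remains to prove $L_n(L(\A))\subseteq L_n(L(\A'))$, and for this it suffices to show: if $\sigma=u\cdot v^\omega$ with $|u\cdot v|=n$ is accepted by $\A$, then $\A$ has an accepting run on $\sigma$ in which every maximal block of states outside $F$, including the block before the first visit to $F$, has length at most $N$. Such a run is accepted by $\A'$ by construction, so $\sigma\in L(\A')$, and being a lasso of length $n$, $\sigma\in L_n(L(\A'))$; together with the previous paragraph this gives $L_n(L(\A'))=L_n(L(\A))$, hence $L(\A')\underapprox L(\A)$. To prove the claim, let $\mathcal L$ be the deterministic automaton over $\Sigma$ with $n$ states whose unique infinite run spells out $\sigma$ (a stem of $|u|$ states followed by a cycle of $|v|$ states), and form the product $\mathcal P=\A\times\mathcal L$ on state set $Q\times\{1,\dots,n\}$; accepting runs of $\A$ on $\sigma$ are exactly the infinite paths in $\mathcal P$ from an initial state that visit $F\times\{1,\dots,n\}$ infinitely often. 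If one exists, pick a \emph{shortest} cycle $\rho_2$ in $\mathcal P$ that is reachable from an initial state and contains a vertex of $F\times\{1,\dots,n\}$ -- a counting argument shows it must be a simple cycle -- and a \emph{shortest} path $\rho_1$ from an initial state to $\rho_2$; then $\rho_1$ is simple and meets $\rho_2$ only at its last vertex. Consequently, along the run $\rho_1\cdot\rho_2^\omega$ no vertex outside $F\times\{1,\dots,n\}$ recurs within any single maximal block of such vertices (whether the block lies inside $\rho_1$, inside or across the turns of $\rho_2$, or straddles the junction of $\rho_1$ and $\rho_2$). Since $\mathcal P$ has only $n\cdot|Q\setminus F|=N$ vertices outside $F\times\{1,\dots,n\}$, every such block has length at most $N$; projecting away the $\mathcal L$-component gives the required run of $\A$ on $\sigma$.

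The main obstacle is that last step: checking that a shortest accepting lasso in $\mathcal P$ is tight enough that no out-of-$F$ vertex repeats within one block of the infinite run, in particular that the block straddling $\rho_1$ and the first turn of $\rho_2$ does not collect distinct out-of-$F$ vertices contributed separately by the two parts. This is exactly where the minimality of $\rho_1$ (no re-entry into $\rho_2$ before its endpoint, and $\rho_1$ simple) and of $\rho_2$ (simplicity) is needed; the remaining verifications -- the shape of $\mathcal L$, the correspondence between runs of $\A$ and paths of $\mathcal P$, and the bookkeeping that a run with all blocks of length $\le N$ is precisely what $\A'$ accepts -- are routine.
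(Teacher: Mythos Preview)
Your proposal is correct and follows essentially the same approach as the paper: the safety automaton $\A'$ is the same counter-based construction (states in $F$ together with pairs of a non-$F$ state and a counter up to $n\cdot|Q\setminus F|$), and the crux is the claim that every $n$-lasso accepted by $\A$ admits an accepting run in which at most $n\cdot|Q\setminus F|$ consecutive states lie outside $F$. The paper merely asserts this claim; you actually prove it via the product with the $n$-state lasso automaton $\mathcal L$ and a shortest-cycle/shortest-path argument, which is sound (the disjointness analysis for the block straddling $\rho_1$ and $\rho_2$ goes through exactly as you outline) and fills in what the paper leaves implicit.
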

\paragraph*{Idea \& Construction.}
	 We construct a safety automaton $\A'$ using the following idea: If an ultimately-periodic word $\sigma = u\cdot v^\omega$ with $|u\cdot v| = n$ is accepted by a B\"uchi automaton $\A =  (Q,Q_0,\delta,F)$, then $\A$ has a run for $\sigma$, where it takes no more than $n\cdot |Q\setminus F|$ steps to observe a state in $F$, and, furthermore $F$ is visited at least once every $n\cdot |Q\setminus F|$  steps. In the automaton $\A'$, we keep track of the number of steps without seeing an accepting state, and reset the counter every time we visit one. If the counter exceeds  $n\cdot |Q\setminus F|$, then $\A'$ rejects. 
	
	Formally, we define $\mathcal A'= (Q',Q'_0,\delta')$ as follows:
	\begin{itemize}
		\item $Q'= ((Q\setminus F)\times\{1,\dots,n\cdot|Q\setminus F|\}) \cup (F\times\{0\})$
		\item $Q'_0 = (Q_0\cap F)\times\{0\} \cup (Q_0\setminus F)\times\{1\}$ 
		\item For the transition relation we distinguish two cases. For $c< n\cdot|Q\setminus F|$  
		$$\delta((q,c),\alpha) = \{(q',d) \mid q'\in \delta(q,\alpha), ~d= 0 \mbox{ if } q' \in F,~ d= c+1 \mbox{ if } q'\not \in F\}$$ 
		otherwise $\delta((q,c),\alpha) = \emptyset$.
		\end{itemize}  

	Note that, if the given B\"uchi automaton is deterministic, then our construction also produces a deterministic safety automaton.   \qed
\bigskip

Theorems~\ref{thm:safety} and \ref{thm:buechi-to-safety} provide safety automata of different sizes: the safety automaton obtained by Theorem~\ref{thm:safety} is exponential in the bound, the safety automaton obtained by
Theorem~\ref{thm:buechi-to-safety} is linear in the bound. The reason for this difference is that the size of the automaton constructed according to Theorem~\ref{thm:safety}  is independent of the linear-time property,
whereas the size of the automaton constructed according to Theorem~\ref{thm:buechi-to-safety} is for a specific linear-time property
(given as a B\"uchi automaton, whose size enters as a quadratic factor). The following theorem shows that a further reduction, below the linear number of states in the bound, is impossible.

\begin{theorem}
\label{thm:buechi-to-safetylower}
There is a linear-time property $\varphi$, such that, for every bound $n \in \mathbb N$,  
 every safety  $n$-lasso-precise automaton for $\varphi$  has at least $n$ states. 
\end{theorem}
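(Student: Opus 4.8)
The plan is to exhibit a single linear-time property $\varphi$ for which the language $L_n(\varphi)$ of $n$-models cannot be recognized by any safety automaton with fewer than $n$ states, for every $n$. The natural candidate is a liveness-style property over a small alphabet, say $\Sigma = \{a,b\}$, such as $\varphi = \{\sigma \in \Sigma^\omega \mid \sigma \text{ contains infinitely many } a\}$ (i.e., the language of $\LTLglobally\LTLfinally a$). The key point is that since $\varphi$ is not a safety property, every safety $n$-lasso-precise approximation $L(\A')$ must be a strict underapproximation whose $n$-models are exactly those of $\varphi$, and I want to argue that pinning down precisely the $n$-models forces at least $n$ states.

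First I would fix $n$ and a safety automaton $\A' = (Q,Q_0,\delta)$ with $L(\A') \underapprox \varphi$, and work with a concrete family of words that must be distinguished. A good choice is the words $w_k = (a\, b^{k-1})^\omega$ for $k = 1,\dots,n$: each $w_k$ is a model of $\varphi$ representable by a lasso of length $k \le n$, hence $w_k \in L_n(\varphi) = L_n(L(\A'))$, so $\A'$ accepts $w_k$ and has an infinite (hence, by finiteness, ultimately-periodic) accepting run on it. Next I would locate, for each $k$, the state reached after reading a suitable finite prefix — for instance the state $q_k$ that $\A'$ is in after reading the prefix $a\,b^{k-1}$ along its accepting run on $w_k$, or more robustly a state that recurs on the loop. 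The heart of the argument is a cut-and-paste/fooling-set step: if two of these states coincide, say the state after reading $a\,b^{j-1}$ on the run for $w_j$ equals the state after reading $a\,b^{k-1}$ on the run for $w_k$ with $j < k$, then because $\A'$ is a safety automaton (every infinite run is accepting and there is no loop condition to violate), I can splice the run for $w_k$ onto the prefix of the run for $w_j$ and obtain an accepting run, hence acceptance, of a word like $a\,b^{j-1}(a\,b^{k-1})^\omega$. I then need this spliced word to be an $n$-model of $\A'$ that is \emph{not} an $n$-model of $\varphi$ — or more simply, to contradict $L(\A') \subseteq \varphi$ directly, I should choose the family so that one of the spliced words falls outside $\varphi$ altogether. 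So it is cleaner to also use a "bad" word: combine a prefix whose accepting run ends in $q_k$ with the tail $b^\omega$ of an accepting partial run on something, forcing $\A'$ to accept a word with only finitely many $a$'s, contradicting $L(\A') \subseteq \varphi$ — provided the relevant states were distinct enough. The upshot is that the $n$ states $q_1,\dots,q_n$ (or a carefully chosen set of $n$ "prefix classes") must be pairwise distinct, giving $|Q| \ge n$.

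The main obstacle is arranging the fooling set so that the contradiction is with $L(\A') \subseteq \varphi$ (the underapproximation side) rather than with $L_n$-precision, since a safety automaton is automatically closed under taking arbitrary suffixes-of-runs in a way that cuts against us. Concretely, I expect to need a property $\varphi$ where the "obligation" that must be met has a growing deadline: a property such as $\varphi = \{\sigma \mid$ between any two consecutive $a$'s there are fewer and fewer $b$'s, eventually bounded$\}$ is awkward, so instead I would keep $\varphi = \LTLglobally\LTLfinally a$ but count states along a single carefully chosen run. The precise bookkeeping — identifying which prefixes' endpoint states must differ, and checking that a coincidence yields an infinite accepting run on a word outside $\varphi$ (since safety runs can't reject) — is the step that needs care, but it is routine cut-and-paste once the word family is fixed; I would present it as: pick $\sigma_k = a^{\,?}\cdots$ witnessing lassos of lengths $1,\dots,n$, observe all are accepted, show a collision among their position-$n$ (or loop-entry) states produces an accepted word with finitely many $a$'s, and conclude $|Q|\ge n$.
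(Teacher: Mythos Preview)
Your choice of property $\varphi = \LTLglobally\LTLfinally a$ over $\{a,b\}$ is exactly right, and so is the endgame: force the safety automaton to accept $b^\omega$ (a word outside $\varphi$) via a pigeonhole/cut-and-paste on an accepting run. But the fooling-set version you describe does not close. If $q_j$ is the state reached after $ab^{j-1}$ on an accepting run of $w_j=(ab^{j-1})^\omega$ and $q_k$ is the state after $ab^{k-1}$ on a run of $w_k$, then splicing gives an accepting run on $ab^{j-1}(ab^{k-1})^\omega$, which still has infinitely many $a$'s and hence lies in $\varphi$; the same happens if you compare position-$n$ or loop-entry states across the family. No collision among the $w_k$ produces a word outside $\varphi$, so you get no contradiction with $L(\A')\subseteq\varphi$, and since the spliced words are genuine models of $\varphi$ you cannot derive a contradiction with $n$-lasso precision either.

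The fix---which is what the paper does, and which you gesture at with ``count states along a single carefully chosen run'' without committing to it---is to drop the family and use \emph{one} word, $(b^{n-1}a)^\omega$. It is an $n$-model of $\varphi$, hence accepted; take any accepting run and look at the states $q_0,\ldots,q_{n-1}$ at its first $n$ positions. If $q_i=q_j$ with $i<j\le n-1$, the run segment between them reads only $b$'s (the first $a$ sits at position $n$), so there is a $b$-labelled cycle reachable from $Q_0$ by a $b$-labelled path. That yields an infinite run on $b^\omega$, which is accepting because $\A'$ is a safety automaton, contradicting $L(\A')\subseteq\varphi$. Hence the $n$ states are pairwise distinct and $|\A'|\ge n$. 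The whole argument is two lines once the right single word is fixed; the family detour is where your plan loses traction.
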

\begin{proof}
Let $\Sigma=\{0,1\}$. We define $\varphi$ as the language over $\Sigma$ that consists of all words where the letter $1$ occurs infinitely often.
 Let $\A = (Q,Q_0,\delta)$ be a safety $n$-lasso-precise automaton for $\varphi$.
We consider the set $Q' \subseteq Q$ of states on the first $n$ positions of an accepting run of the word $(0^{n-1}1)^\omega$.
We show that $|Q'|=n$ and, therefore, $|Q|\geq n$.

Assume that this does not hold, i.e., $|Q'|<n$; then some state $q$ must appear
on two different positions among the first $n$ positions of the run. By repeating the part of the run between the two occurrences of $q$ infinitely often,
we obtain an accepting run for the word $0^\omega$, which contradicts our assumption that $\A$ is $n$-lasso-precise for $\varphi$.\qed
\end{proof}

With a construction similar to Theorem~\ref{thm:buechi-to-safety}, we can approximate a parity automaton with $m+1$ colors by a parity automaton with $m$ colors.\looseness=-1

\begin{theorem}[Approximating parity automata by parity automata with one color less]
\label{thm:onecolorlessParity}
	For every deterministic  parity automaton $\A =(Q,Q_0,\delta,\mu)$  with $m+1$ colors and a bound $n\in \nats$, there is a deterministic parity automaton $\A'$ with $m$ colors and $n \cdot |Q\setminus F|^2+|F|$ states, where $F$ is the set of states with highest color, such that $L(\A') \underapprox L(\A)$.
\end{theorem}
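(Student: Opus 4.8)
The plan is to adapt the counter construction of Theorem~\ref{thm:buechi-to-safety} so that, instead of eliminating the acceptance condition entirely, it eliminates only the highest color. Let $k = m+1$ be the top color of $\A$, and let $F = \mu^{-1}(k)$ be the set of states carrying that color. The intuition is the same as before: if an ultimately-periodic word $u\cdot v^\omega$ with $|u\cdot v| = n$ is accepted by $\A$, then along the accepting run either the color $k$ is visited infinitely often (and $k$ is even), or from some point on no state of $F$ is seen at all. In the latter case, since the run is eventually periodic with a lasso of size bounded by $n\cdot|Q|$, and more precisely since we only care about $n$-lassos of the input, a standard pumping argument shows that between any two consecutive visits to $F$ one may assume at most $n\cdot|Q\setminus F|$ steps — otherwise a state off $F$ repeats within the loop and the loop can be shortened without visiting $F$, yielding an accepting $n$-lasso on which $F$ is never visited in the loop, hence on which the maximal infinitely-recurring color is at most $k-1$.

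First I would set up the state space exactly as in Theorem~\ref{thm:buechi-to-safety}: $Q' = \big((Q\setminus F)\times\{1,\dots,n\cdot|Q\setminus F|\}\big) \cup \big(F\times\{0\}\big)$, with the analogous initial states and the analogous transition relation, where the counter increments on states outside $F$ and resets to $0$ on entering $F$; once the counter would exceed $n\cdot|Q\setminus F|$, the transition is undefined. The new coloring $\mu'$ is inherited from $\mu$ on the first component \emph{except} that states in $F$ (which in $\A$ had color $k$) are recolored: I would give $(q,0)$ for $q\in F$ a color that does not change the parity acceptance, e.g.\ color $0$ if $k$ is even, or — to be safe about whether dropping $F$-visits can turn a rejecting run into an accepting one — reassign them the second-highest color $k-1$ or $k-2$ appropriately so that $\A'$ genuinely uses only $m$ colors $\{0,\dots,m-1\}$ and so that a run that visits $F$ only finitely often in $\A$ has the same acceptance status in $\A'$. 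Since, along any run of $\A'$ that is not truncated, $F$-states are visited infinitely often (the counter is bounded, so it must be reset infinitely often), the behavior of $\A'$ on a surviving run is: accept iff the run survives and, after projecting away the $F$-visits, the parity condition among colors $\{1,\dots,k-1\}$ is met.

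Then I would argue the two inclusions behind $L(\A')\underapprox L(\A)$. For $L(\A')\subseteq L(\A)$: any accepting run of $\A'$ projects to a run of $\A$ that visits $F$ infinitely often; with $k$ even this run is accepting in $\A$, and with $k$ odd one checks that the recoloring was chosen so that acceptance in $\A'$ implies the projected $\A$-run already satisfied parity without relying on $F$. For the $n$-lasso-precision $L_n(\A') = L_n(\A)$: given $\sigma = u\cdot v^\omega$ with $|u\cdot v|\le n$ accepted by $\A$, take an accepting run; if its infinitely-recurring top color is $k$ (even), shorten the loop so that consecutive $F$-visits are at most $n\cdot|Q\setminus F|$ apart, lift to $\A'$, and it survives and accepts; if its infinitely-recurring top color is below $k$, then eventually no $F$-state occurs, so the tail run uses only colors $\le k-1$ and the corresponding $\A'$-run (after the counter has been suitably managed) survives and is accepting under $\mu'$. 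The main obstacle — and the place the proof needs care rather than routine bookkeeping — is precisely the parity reassignment on former $F$-states: when $k$ is odd, a run of $\A$ whose top recurring color is $k$ is \emph{rejecting}, and we must ensure $\A'$ does not accidentally accept it after deleting those color-$k$ visits; the fix is to observe that for $n$-lassos this case is harmless because we are free to either keep a witness with an even top color or fall back to a run avoiding $F$, so the recoloring can always be taken to be $k-1$ (preserving parity of the "next" color down) and the size bound $n\cdot|Q\setminus F|^2 + |F|$ follows exactly as in Theorem~\ref{thm:buechi-to-safety}.
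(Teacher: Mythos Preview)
Your proposal has a genuine gap, and it shows up exactly at the two places you flag as needing care.

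\textbf{Truncation loses accepting runs.} You make the transition undefined once the counter exceeds $n\cdot|Q\setminus F|$. But an accepting run of $\A$ need not visit $F$ at all: if the highest color recurring infinitely often is some even $c<k$, the run eventually avoids $F$ forever. Your $\A'$ kills precisely these runs, so the corresponding $n$-lassos are not in $L_n(\A')$, and $L_n(\A')\neq L_n(\A)$. The parenthetical ``after the counter has been suitably managed'' is the whole difficulty; as written, the counter is not managed at all for this case.

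\textbf{Recoloring only the $F$-states cannot work.} You keep $\mu'((q,c))=\mu(q)$ for $q\notin F$ and try to pick a new color for $F$-states. Consider an $n$-lasso whose $\A$-run visits $F$ infinitely often \emph{and} visits some state of color $k-1$ infinitely often. In $\A$ the top recurring color is $k$. In your $\A'$, whatever value $\leq k-1$ you assign to $F$-states, the top recurring color in the $\A'$-run is $k-1$. If $k$ is even, $\A$ accepts but your $\A'$ rejects (so $n$-lasso-precision fails); if $k$ is odd, $\A$ rejects but your $\A'$ accepts (so $L(\A')\subseteq L(\A)$ fails). Your final sentence's choice of $k-1$ for the odd case exhibits exactly this problem.

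\textbf{What the adaptation of Theorem~\ref{thm:buechi-to-safety} actually requires.} Instead of truncating, let the counter \emph{saturate} at $N=n\cdot|Q\setminus F|$; from a saturated state, a transition into $F$ is undefined (on $n$-lassos this never happens, by the pigeonhole argument you already gave). Then make the coloring depend on the counter, not just on $q$: give every \emph{non}-saturated state a single fixed color with the same parity as $k$ (e.g.\ $0$ if $k$ is even, $1$ if $k$ is odd), and give every saturated state $(q,N)$ the color $\mu(q)\le k-1$. Now a run that visits $F$ infinitely often stays non-saturated and has top recurring color of the fixed parity (matching $\A$'s verdict), while a run that eventually avoids $F$ saturates and is judged by the remaining $m$ colors exactly as in $\A$. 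The state count is unchanged: $|Q\setminus F|\cdot N+|F|=n\cdot|Q\setminus F|^2+|F|$. This is the ``similar construction'' the paper alludes to; your counter idea is right, but the acceptance mechanism must use the counter, not just the recolored $F$-states.
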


By iteratively applying Theorem~\ref{thm:onecolorlessParity},  we can approximate any parity automaton with $m$ colors by a corresponding parity automaton with $m' < m$ colors. This, however, will incur a blow-up in the size of the automaton that is exponential in the number $m$ of colors. We now provide a direct construction, which is polynomial both in $m$ and in the size of $\A$.

\begin{theorem}[Approximating parity automata by parity automata with fewer colors]
	For every deterministic  parity automaton $\mathcal A=(Q,Q_0,\delta,\mu)$  with $m$ colors, a bound $n\in \nats$ and $0<m'<m$, there is a deterministic  parity automaton $\mathcal A'$  with $m'$ colors  and $(n\cdot|Q|+1)\cdot |Q|\cdot (m-m'+2)$ states such that $L(\A') \underapprox \mathcal L(\A)$.
        \label{thm:fewercolors}
\end{theorem}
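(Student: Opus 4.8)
The plan is to generalize the construction of Theorem~\ref{thm:onecolorlessParity} so that instead of collapsing a single top color per application, we simultaneously handle all colors from $m'+1$ up to $m$ in one pass, using a bounded counter to detect when a "too high" color is being visited too often. The intuition is that in an ultimately-periodic word $\sigma = u\cdot v^\omega$ with $|u\cdot v|\le n$, if $\sigma$ is accepted by $\A$, then the unique run of $\A$ on $\sigma$ is itself ultimately periodic, and from some point on only colors that occur infinitely often are seen; the largest such color $c^\ast$ is even. Within the loop, $c^\ast$ is revisited at least once every $n\cdot|Q|$ steps, and no color strictly larger than $c^\ast$ ever appears after the loop has stabilized. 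So $\A'$ will track, alongside the current state $q$ of $\A$, (i) a bounded step counter in $\{0,\dots,n\cdot|Q|\}$ measuring how long it has been since the last visit to a state whose color is "currently considered acceptable," and (ii) the identity of the highest acceptable color, which ranges over $\{m'-1,m',\dots,m\}$ or a similar set of size $m-m'+2$ — this is where the factor $(m-m'+2)$ in the state count comes from. Whenever $\A$ visits a color higher than the currently tracked acceptable color, $\A'$ either rejects (counter overflow) or, if this is tolerable, promotes the tracked color; whenever a color at least as high as an even tracked color is seen, the counter resets.

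More concretely, I would have $\A'$ operate as follows. It remembers a "target color" $t$, always even, which is its current guess for $c^\ast$; it remembers the state $q\in Q$; and it remembers a counter $c\in\{0,\dots,n\cdot|Q|\}$. On reading a letter and moving $q\to q'$ in $\A$ with new color $d=\mu(q')$: if $d > t$ and $d > m'-1$, the run has exhibited a color higher than the target, so the target guess was wrong — we must reject (set transition to $\emptyset$) once the counter would overflow, otherwise we reset $c$ and raise $t$ appropriately; if $d \le t$ and $d$ is large enough to "count" toward the target (i.e. $d = t$, or more generally $d$ is within the set of colors we are collapsing), reset $c\gets 0$; otherwise increment $c$, and if $c$ exceeds $n\cdot|Q|$, reject. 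Colors $0,\dots,m'-1$ are retained verbatim as the colors of $\A'$ so that the behavior of runs that genuinely stabilize below $m'$ is preserved exactly; the acceptance coloring of $\A'$ assigns the retained low colors to "low" states and assigns a fixed even top color to the states currently in a reset/acceptable configuration. The correctness argument has two directions. For soundness ($L(\A')\subseteq L(\A)$): an accepting run of $\A'$ either stabilizes in the low-color regime, in which case it is literally an accepting run of $\A$ below $m'$, or it keeps resetting the counter at an even target color $t$ infinitely often while never exceeding $t$ in the limit, which forces the corresponding run of $\A$ to have maximal infinitely-recurring color equal to $t$ (even), hence accepting. For $n$-lasso-precision ($L_n(\A') = L_n(\A)$): given $u\cdot v^\omega\in L_n(\A)$, take the unique run $\rho$ of $\A$; after a prefix of length at most $n\cdot|Q|$ the run enters its periodic part, whose period divides into $\A$-states of total multiplicity at most $n\cdot|Q|$; let $c^\ast$ be the max color seen infinitely often (even, by acceptance). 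Then choosing target $t = c^\ast$, the counter in $\A'$ never overflows because $c^\ast$ is revisited within every window of $n\cdot|Q|$ steps, so $\A'$ accepts $u\cdot v^\omega$; conversely $L_n(\A')\subseteq L_n(\A)$ follows from soundness together with Lemma on $L_n$ monotonicity is not even needed — soundness alone gives $L(\A')\subseteq L(\A)$ hence $L_n(\A')\subseteq L_n(\A)$.

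The main obstacle — and the part requiring genuine care rather than bookkeeping — is getting the counter bound and the set of tracked target colors exactly right so that the state count comes out to $(n\cdot|Q|+1)\cdot|Q|\cdot(m-m'+2)$, and simultaneously ensuring that the transition rule for "a color higher than the current target appears" is coherent: when $\A$ exhibits a color $d$ with $m' \le d$ that is higher than the current even target $t$, the automaton must decide whether this is a transient excursion (tolerate, reset counter, possibly adjust $t$ up to the largest even value $\le d$) or a genuine violation; the subtlety is that a high odd color seen infinitely often should cause rejection, while the same odd color seen only finitely often in the periodic part cannot — but over an $n$-lasso, "finitely often in the ultimate period" means "never in the period," so a high color appearing in the period at all while not being dominated by a higher even period-color must trigger rejection within $n\cdot|Q|$ steps. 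Making this precise is essentially an argument that the run's behavior over one period length $\le n\cdot|Q|$ already determines acceptance, which is exactly why the linear-in-$n$ counter suffices; I would isolate this as the key lemma (an $n$-lasso is accepted by $\A$ iff its run visits some even color $c$ such that no color $>c$ is visited within the same window of $n\cdot|Q|$ consecutive steps, forever). The rest — defining $\delta'$, verifying determinism is preserved, counting states — is routine and parallels Theorems~\ref{thm:buechi-to-safety} and \ref{thm:onecolorlessParity}.
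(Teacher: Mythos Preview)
Your overall intuition is right, and the state count you arrive at matches the paper's, but there is a genuine gap in the construction: you never explain how $\A'$ gets past the \emph{transient prefix} of the run. Your automaton starts tracking a target $t$ and a counter $c$ from the very first step. Now take an $n$-lasso $u\cdot v^\omega\in L_n(\A)$ whose run in $\A$ visits a very high color $d$ somewhere in the prefix $u$ but never again in the period $v$; suppose the highest color in the period is some even $c^\ast<d$. By your rules, seeing $d$ raises the target $t$ to (around) $d$ and resets $c$; thereafter, in the period, no color $\geq t$ ever occurs, so the counter runs up past $n\cdot|Q|$ and you reject---but the word is in $L(\A)$. Your key lemma (``from some point on, an even color $c$ recurs and nothing larger appears'') is correct as a statement about the run, but the phrase ``from some point on'' is precisely what your automaton does not implement. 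The paper fixes this with an explicit two-phase structure: in phase~1 the automaton simply counts off the first $n\cdot|Q|$ steps (states $(q,c)$ with color~$0$), which guarantees that by the time phase~2 begins, the run of $\A$ is already inside its period; only then does it start tracking the highest color. The factor $(m-m'+2)$ arises as $1$ (for phase~1, no third component) plus $m-m'+1$ (for the phase~2 values $h\in\{-1,m',\ldots,m-1\}$), not from a single target set of that size.

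A secondary issue is the insistence that the target $t$ be even. This forces you into the awkward case analysis you flag yourself (what to do when a high \emph{odd} $d>t$ appears), and it conflicts with your own state count since the even values in $\{m',\ldots,m-1\}$ number roughly $(m-m')/2$, not $m-m'+2$. The paper instead tracks the \emph{actual} highest color $h$ seen in phase~2, regardless of parity, and assigns the $\A'$-state color~$0$ if $h$ is even and color~$1$ if $h$ is odd; thus a run whose dominant recurring color is odd is rejected by the parity condition of $\A'$, not by counter overflow. Counter overflow is used only to kill runs where the tracked $h$ was a transient within the period---which, after the phase-1 skip, cannot happen on an $n$-lasso. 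If $h=-1$ (no color $\geq m'$ seen in phase~2) and the counter maxes out, $\A'$ simply inherits $\mu(q)<m'$, preserving the low-color behavior exactly.
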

\paragraph*{Idea \& Construction.}
Our automaton construction is based on the following idea. An ultimately-periodic word in $L(\A)$ representable by a lasso  of length $n$ has an ultimately-periodic run in $\A$ of size at most $n\cdot |Q|$. The ultimately-periodic run is accepting if the highest color occurring in its period is even. For a given ultimately-periodic word with lasso of length $n$, our constructed automaton $\A'$ checks whether this word has an ultimately-periodic accepting run of size $n\cdot |Q|$ in $\A$.
Adapting the same idea as in  Theorem~\ref{thm:buechi-to-safety}, we check whether the colors we wish to eliminate appear within $n\cdot |Q|$ steps. We reject words with runs where these colors appear with distances larger than $n\cdot |Q|$. On the other runs we use the acceptance condition of the remaining colors. 

Let $\A = (Q,Q_0,\delta,\mu)$ where $\mu: Q \rightarrow \{0,\dots, m-1\}$. We construct the parity automaton $\A' =(Q',Q'_0,\delta',\mu')$ with $\mu': Q \rightarrow \{0,\dots,m'-1\}$ and where: 
$$Q' = (Q\times \{0,\dots,n\cdot|Q|\}) \cup  
 	(Q\times \{0,\dots,n\cdot|Q|\}\times \{-1,m',\dots,m-1\})$$
 and 
$$Q'_0 = \{(q,0) \mid  q \in Q_0\}.$$ 
The transition relation and coloring function are given as follows. 
In contrast to Theorem~\ref{thm:buechi-to-safety} we  now need to first check which is the highest color that appears in the period of the run.  
This check is done respecting the following cases. 
 $$\mbox{Case (1):  }\delta'((q,c),\alpha) = \{(q',c+1) \mid q' \in \delta(q,\alpha) \} ~~\mbox{ if }\; c< n\cdot|Q|-1
									$$
As we are only interested in the highest color that appears in the period of the run, case (1)  makes sure that we reach this period by skipping the first $n\cdot |Q|$ steps, i.e., we simply follow the transition relation of $\A$ and increase the counter (denoted by $c$).
\begin{align*}
	\mbox{Case (2):  } \delta'((q,c),\alpha)= & \{(q',0,\mu(q')) \mid q' \in \delta(q,\alpha),\mu(q')\ge m'\} ~\cup \\
	& \{(q',0,-1)  \mid q' \in \delta(q,\alpha),~\mu(q')< m'\}~~\mbox{ if }  c= n\cdot|Q|-1
\end{align*}
In Case (2) is the transition to the second phase, once we have skipped the first $n\cdot |Q|$ states. From here on we save the highest color seen that is larger than $m'-1$.
\begin{align*}
	\mbox{Case (3):  } \delta'((q,c,h),\alpha)= & \{(q',0,\mu(q')) \mid q' \in \delta(q,\alpha),\mu(q')> h, \mu(q')\ge  m'\} ~\cup \\
	& \{(q',0,h)  \mid q' \in \delta(q,\alpha),~\mu(q') =h\}~\cup \\
	& \{(q',c+1,h) \mid q' \in \delta(q,\alpha),~ \mu(q')<h \vee \mu(q')<m'\}\\
	& \mbox{ if }  c\leq n\cdot|Q|-1
\end{align*}
In case (3) we track the highest color $h$ seen so far. If $h$ is higher than $m'-1$ we save this color  and check how long it takes for this color to reappear. In case it appears in less that $n\cdot|Q|$ steps ($\mu(q')=h$) we reset the counter for this color. If a higher color is observed ($\mu(q')>h$), $h$ is replaced by the color  and the counter is reset. 
\begin{align*}
\mbox{Case (4):  }\delta'((q,c,h),\alpha)	 = & \{(q',c,h) \mid q' \in \delta(q,\alpha),~ \mu(q')<m'\}\\
& \mbox{if  } c = n\cdot |Q| \mbox{ and } h = -1 
\end{align*}
In the case where the counter exceeds $n\cdot |Q|$ for some saved color, the automaton rejects, but only if colors higher than $m'$ were observed along the way. Otherwise, the automaton $\A'$ accepts as $\A$ with the non-eliminated colors.  The coloring function is defined as follows 
$$\mu'(\tilde q) = \begin{cases}
							0 & \tilde q = (q,c)\\
							1 & \tilde q = (q,c,h),\; 0 \leq c < n\cdot |Q|, h \text{ is odd}\\
							0 & \tilde q = (q,c,h),\; 0 \leq c<n\cdot |Q|, h \text{ is even}\\
							\mu(q) & \tilde q = (q,c,-1),\; c = n\cdot |Q|
						 \end{cases}$$

\hfill \qed

With this, we conclude the study of the approximation of linear-time properties represented by  $\omega$-automata with  lasso-precise automata with simpler acceptance conditions preserving their determinism. In the next subsection, we turn to the approximation of nondeterministic automata with lasso-precise deterministic automata.

\subsection{Lasso-precise Deterministic Approximations}
We now study lasso-precise approximations from the point of view of the determinism of the automata representing  $\omega$-regular languages.
The complexity of determinizing $\omega$-automata, in particular the construction of deterministic parity automata, has been studied extensively (cf.~\cite{DeterminisationParityScheweV14}). The size of the deterministic automaton that recognizes the
same language as the given nondeterministic automaton is, in the worst case, exponential in the number of states of the given automaton.
By contrast, the size of the deterministic safety automaton provided by Theorem~\ref{thm:safety} is independent of the given language and exponential only in the bound. 
For small bounds, Theorem~\ref{thm:safety} thus provides a deterministic lasso-precise approximation with a small number of states. The following theorem shows that, for large bounds, it is not, in general, possible to produce small deterministic lasso-precise approximations.
If the bound is as large as the number of states of the given nondeterministic automaton, then the deterministic lasso-precise approximation has, in the worst case, an exponential number of states.

\begin{theorem}
For every $k \in \nats$ there exists a nondeterministic parity automaton $\A$  with $O(k)$ states, such that, for every bound $n  \geq |\A|$, the minimal deterministic parity automaton $\A'$ with  $L(\A') \underapprox L(\A)$ has at least $2^{k}$ states. 
\end{theorem}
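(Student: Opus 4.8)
The plan is to exhibit an explicit family and reuse the classical NFA‑determinization blow‑up, adapted to $\omega$‑words and lassos. Fix $k \ge 1$ and let $\Sigma = \{0,1\}$. I take
$L_k = \{\sigma \in \Sigma^\omega \mid \exists p \ge 0.\ \sigma(p) = 1 \text{ and } \sigma(p+k) = 1\}$,
the infinitary analogue of the textbook language ``the $k$-th-to-last letter is $1$'' for which every DFA needs $2^k$ states: a nondeterministic automaton only has to guess the single position $p$ and then count $k$ steps, whereas a deterministic automaton is forced to remember, for each of the last $k$ positions, whether it carried a $1$. Concretely $L_k$ is recognized by a B\"uchi automaton (hence a parity automaton) with a waiting state $q_{\mathrm{wait}}$ (self-loop on $\Sigma$, and on reading $1$ it may also ``commit''), a chain $s_k \to s_{k-1} \to \dots \to s_1$ traversed on any letter, and an accepting sink $q_{\mathrm{acc}}$ reached from $s_1$ on reading $1$ --- i.e.\ $k+2$ states. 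I then pad it with $k$ dead states, obtaining $\A_k$ with $|\A_k| = 2k+2 \in O(k)$; the padding is needed because the words that separate the $2^k$ residual classes have length $\approx 2k$, and we must be able to realize them as lassos of length $n$ for \emph{every} $n \ge |\A_k|$, including $n = |\A_k|$.

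Next, fix any $n \ge |\A_k|$ and any DPA $\A'$ with $L(\A') \underapprox L(\A_k) = L_k$, so $L(\A') \subseteq L_k$ and $L_n(\A') = L_n(L_k)$; I have to show $|\A'| \ge 2^k$. For $x = x_0\cdots x_{k-1} \in \{0,1\}^k$ let $u_x$ be the length-$k$ word $x$ itself and let $q_x$ be the state $\A'$ reaches after reading $u_x$ from its initial state (or $\bot$ if a transition is missing). The one elementary computation is that, for the continuation $c_j = 0^{j} 1 0^\omega$, the word $u_x c_j$ belongs to $L_k$ if and only if $x_j = 1$: the single $1$ contributed by $c_j$ sits at position $k+j$, whose only possible witness partner at distance $k$ is position $j$, which carries $x_j$. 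Both $u_x c_j$ and the auxiliary word $w_x = u_x\, 1\, 0^{k-1}\, 1\, 0^\omega$ (which lies in $L_k$, with witness $p=k$) have lasso representations of length at most $2k+1 \le n$, and by padding the trailing $0$-loop they have a representation of length exactly $n$; hence membership in $L_k$ coincides with membership in $L_n(L_k) = L_n(\A')$, and therefore with acceptance by $\A'$.

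The remainder is the standard fooling argument. First, since $\A'$ accepts $w_x$ its unique run on $w_x$ is infinite, so $q_x \ne \bot$, i.e.\ $q_x \in Q'$. Second, suppose $x \ne x'$ and pick $j$ with $x_j \ne x'_j$, say $x_j = 1$, $x'_j = 0$; then $\A'$ accepts $u_x c_j$ but rejects $u_{x'} c_j$ (as $u_{x'} c_j \notin L_k \supseteq L(\A')$). If $q_x = q_{x'}$, then by determinism the runs of $\A'$ on $u_x c_j$ and $u_{x'} c_j$ coincide from step $k$ onwards, so they have the same set of infinitely recurring states and the same parity verdict --- contradiction. Hence $x \mapsto q_x$ is injective from $\{0,1\}^k$ into $Q'$, giving $|\A'| = |Q'| \ge 2^k$; as $n \ge |\A_k|$ and $\A'$ were arbitrary, every DPA that is an $n$-lasso-precise underapproximation of $L(\A_k)$ --- in particular the minimal one --- has at least $2^k$ states.

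The hard part will not be the blow-up itself, which is classical, but making it survive the lasso restriction at the tightest admissible bound: the ``obvious'' $(k{+}2)$-state automaton is too small compared with the $\approx 2k$-letter separating words, so one must either inflate $\A$ to $\Theta(k)$ states or redesign $L_k$, and one must check that those separating words are still representable as lassos of length exactly $n$ when $n = |\A|$ (handled by unrolling the final $0$-loop). Everything else --- the identity $u_x c_j \in L_k \iff x_j = 1$ and the determinism-based distinguishability step --- is routine.
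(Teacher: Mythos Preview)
Your argument is correct. One harmless off-by-one: the auxiliary word $w_x = u_x\,1\,0^{k-1}\,1\,0^\omega$ has minimal lasso length $2k+2$, not $2k+1$ as you state; but since you padded $\A_k$ to $2k+2$ states and only need the lasso length to be at most $n$ for $n \ge |\A_k| = 2k+2$, nothing breaks. (Incidentally, $w_x$ is only needed for the single case $x = 0^k$; for every other $x$ some $u_x c_j$ already witnesses $q_x \ne \bot$.)

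The paper takes a genuinely different route. It works over the three-letter alphabet $\{0,1,2\}$ with the language $\Omega = \{\{0,1\}^i\cdot 1 \cdot \{0,1\}^{k-1}\cdot 2 \cdot 1^\omega \mid i < k\}$ and a $(2k{+}1)$-state NPA recognizing it. The purpose of the end-marker $2$ and the fixed tail $1^\omega$ is that \emph{every} word of $\Omega$ is already a lasso of length at most $2k+1$; hence for $n \ge 2k+1$ one has $L_n(\Omega) = \Omega$, and therefore any $n$-lasso-precise underapproximation $\A'$ satisfies $L(\A') = \Omega$ exactly. The lower bound then reduces in one line to the classical $2^k$ blow-up for deterministically recognizing $\Omega$. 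Your language $L_k$ over $\{0,1\}$ lacks this ``all models are short lassos'' property, so you cannot conclude $L(\A') = L_k$; instead you run the fooling-set argument directly on the approximation $\A'$, using that the specific separating words $u_x c_j$ and $w_x$ happen to be short lassos and are therefore governed by the equality $L_n(\A') = L_n(L_k)$. Your approach buys a binary alphabet and a textbook-recognizable language at the price of tracking individual witness words and needing the dead-state padding so that the tightest admissible bound $n = |\A_k|$ still covers them; the paper's approach buys a clean reduction to exact-language determinization at the price of a third alphabet letter.
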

\begin{proof}
Let $\Sigma = \{0,1,2\}$, and consider the language
 \[\Omega = \{\{0,1\}^i \cdot 1 \cdot \{0,1\}^{(k-1)} \cdot 2 \cdot 1^\omega \mid i < k\}.\]
That is, $\Omega$ consists of the infinite words over $\{0,1,2\}$ in which the letter $2$ appears exactly once, and the letter exactly $k$ positions prior to that is a $1$, preceded by at most $k-1$ letters.

We can construct a  nondeterministic parity automaton $\mathcal A = (Q,Q_0,\delta,\mu)$ for $\Omega$ with $2k+1$ states as follows. We let $Q = \{0,1\} \times \{1,\ldots,k\} \cup \{q_a\}$ and $Q_0 = \{(0,1)\}$. The function $\mu$ is such that $\mu(q_a)=0$, and $\mu(q) = 1$ for all $q \neq q_a$. We define the transition relation $\delta$ such that $\delta(q_a,1) = \{q_a\}$  and $\delta(q_a,\alpha)= \emptyset$ if  $\alpha \in \{0,2\}$ and 
 \[\delta((b,i),\alpha) = \begin{cases}
 \{(0,i+1)\} & \text{if } b=0, i < k, \alpha=0,\\
 \{(0,i+1),(1,1)\} & \text{if } b=0, i < k, \alpha=1,\\
 \{(1,1)\} & \text{if } b=0, i = k, \alpha=1,\\
 \{(1,i+1)\} & \text{if } b=1, i < k,\alpha \not = 2\\
 \{q_a\} & \text{if } b=1, i = k, \alpha=2.
 \end{cases}\]
 
Let $n \in \nats$ be a bound such that  $n \geq 2k+1$, and let $\mathcal A'$ be a DPA such that $L(\A') \underapprox L(\A)$. By the definition of  $\Omega$ and the fact that $n  \geq 2k+1$ we have that $L(\A') = \Omega$.  We will show that $\A'$ has  at least $2^k$ states.

Suppose that $|\A'| < 2^k$. This means that there exist two different words $\sigma_1,\sigma_2\in \{0,1\}^k$ such that  $\A'$ ends up in the same state when run on $\sigma_1=\alpha_{1,1}\ldots\alpha_{1,k}$ and when run on $\sigma_2=\alpha_{2,1}\ldots\alpha_{2,k}$. 
Since $\sigma_1$ and $\sigma_2$ are different, there must exist an $i$ such that  $\alpha_{1,i} \neq \alpha_{2,i}$. W.l.o.g., suppose that $\alpha_{1,i} =1$  and $\alpha_{2,i}=0$.
 Let $\sigma = 1^{i-1}\cdot 2\cdot 1^\omega$. Consider the words $\sigma_1\cdot\sigma$ and $\sigma_2\cdot\sigma$. In $\sigma_1\cdot\sigma$, the letter appearing $k$ positions before the letter $2$ is $1$, and in $\sigma_2\cdot\sigma$ this letter is $0$. 
 Thus, by the definition of $\Omega$ and $\A'$ we have that $\sigma_1\cdot\sigma$ must be accepted by $\A'$, and $\sigma_2\cdot\sigma$ must be rejected, which is a contradiction with the fact that $\A'$ is deterministic and the assumption that $\sigma_1$ and $\sigma_2$ lead to the same state. 

Since $\A'$ is an arbitrary deterministic parity automaton such that $L(\A') \underapprox L(\A$), we conclude that the minimal such automaton has at least $2^k$ states. \qed
\end{proof}

\section{Automata with Bounded Size}\label{sec:experiments}

In many cases, one is interested in constructing an  automaton of minimal size for a given language. In this section, we solve the problem of computing $n$-lasso-precise automata of \emph{bounded size}. By iteratively increasing the bound on the size of the automaton, this approach can be used to
construct minimal automata. 

Here we consider languages given as LTL formulas \cite{LTL}. LTL formulas are a common starting
point for many verification and synthesis approaches. Rather than going through an intermediate precise automaton, here we propose a symbolic approach that directly yields  an automaton whose language is $n$-lasso-precise approximation for the LTL formula.

\begin{theorem}
	For a  linear-time property $\varphi$  given as an LTL formula over $\AP$, and given bounds $n$, $k$ and $m$, deciding whether there  exists a deterministic parity automaton $\A$ of size $k$ and number of colors $m$, such that, $L(\A) \underapprox \varphi$ is in $\Sigma_{2}^P$.
\end{theorem}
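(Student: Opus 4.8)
The plan is to present the problem as an $\exists\forall$ computation: an $\np$ machine existentially guesses the candidate automaton $\A$ and then makes a single $\textsc{coNP}$ oracle call that certifies $\A$ is as required. Since the alphabet $2^{\AP}$ is exponential in $|\AP|$, $\A$ is not written out letter by letter; the existential phase guesses its $k$ states, the coloring $\mu : Q \to \{0,\dots,m-1\}$, and, for each ordered pair of states $(q,q')$, a propositional formula $\psi_{q,q'}$ over $\AP$ of polynomial size with the intended meaning ``$\delta(q,\alpha)=q'$ iff $\alpha\models\psi_{q,q'}$''. This is a polynomially sized object, so the guess is legal. Determinism is not guessed but deferred to the universal phase, since ``for all letters $\alpha$ and all $q'\neq q''$, $\alpha$ does not satisfy both $\psi_{q,q'}$ and $\psi_{q,q''}$'' is a single universal statement over a polynomial witness.

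The oracle must then verify $L(\A)\underapprox\varphi$, which I would decompose into: (i) $\A$ is a well-formed deterministic automaton; (ii) $L_n(\A)=L_n(\varphi)$; and (iii) $L(\A)\subseteq\varphi$. Each is checked by letting the universal player exhibit a polynomially sized spoiler that is refuted in polynomial time. For (i) the spoiler is a letter together with two states, refuted by a propositional evaluation. For (ii), taking $n$ in unary, the spoiler is a lasso $(u,v)$ with $|u\cdot v|\le n$: one runs the deterministic automaton $\A$ on $u\cdot v^\omega$ --- its run is ultimately periodic and is obtained by iterating over at most $k\cdot n$ pairs (state, position in the lasso), after which acceptance follows from the largest color on the loop --- and, independently, decides $u\cdot v^\omega\models\varphi$ by the standard polynomial-time labelling of the $|u\cdot v|$ lasso positions with the subformulas of $\varphi$; the spoiler is valid exactly when the two answers disagree. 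For (iii) the spoiler is again a lasso $(u,v)$, now with $u\cdot v^\omega\in L(\A)$ but $u\cdot v^\omega\not\models\varphi$, refuted by the same two polynomial-time checks.

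The delicate step --- and what I expect to be the main obstacle --- is the completeness of the spoiler for (iii): one must show that whenever $L(\A)\not\subseteq\varphi$ this is already witnessed by a \emph{polynomially long} lasso. Once (ii) holds, any violating ultimately periodic word has all its lasso representations longer than $n$ (otherwise it would lie in $L_n(\A)=L_n(\varphi)\subseteq\varphi$), so the witness is genuinely beyond the precision bound and the task is to bound how far. The plan is to exploit that $\A$ is deterministic with only $k$ states, so its run on any ultimately periodic word is ultimately periodic with period controlled by $k$, and to pump a shortest violating word down while keeping it outside $\varphi$; reconciling this with the fact that an automaton for $\neg\varphi$ is exponential in $|\varphi|$ --- that is, arguing that the combined witness can nonetheless be taken of length polynomial in $n$, $k$, $m$ and $|\varphi|$ --- is where the argument is hardest. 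With this bound in place, all spoilers are of polynomial size and all refutations run in polynomial time, so the procedure is an $\np$ computation with one $\textsc{coNP}$ oracle call, i.e.\ the problem lies in $\Sigma_2^P$.
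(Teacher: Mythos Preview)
Your high-level scheme is the paper's: existentially guess the automaton, then a single universal phase verifies well-formedness, containment in $\varphi$, and $n$-lasso precision, giving a 2-QBF and hence membership in $\Sigma_2^P$. Your encoding of transitions by propositional guards $\psi_{q,q'}$ is in fact tighter than the paper's, which quantifies one Boolean $\delta_{s,\alpha,s'}$ per letter $\alpha\in 2^{\AP}$ and therefore already spends $2^{|\AP|}$ existential bits. Your handling of determinism and of $L_n(\A)=L_n(\varphi)$ matches the paper's constraints $\phi_\text{DPA}^{k,m}$ and $\phi_{=_n}$ (the latter also unrolls the run to length $n\cdot k$, as you do).

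The gap in your plan is exactly where you locate it: the containment $L(\A)\subseteq\varphi$. You want a violating lasso of length polynomial in $n,k,m,|\varphi|$ and correctly note that the obvious product with an automaton for $\neg\varphi$ gives only an exponential bound; you do not close this. The paper does \emph{not} try to bound a violating word. Instead it quantifies universally over accepting ultimately-periodic \emph{runs} of $\A$ of size $k$ and requires the induced $k$-lasso word to satisfy $\varphi$ (the clause $\phi_{k\text{-accrun}}\wedge\phi_{\text{match}(k)}\to\phi_{\in L_k(\varphi)}$); in effect it replaces $L(\A)\subseteq\varphi$ by $L_k(\A)\subseteq\varphi$, exploiting $|\A|=k$, which yields a polynomial universal witness immediately. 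Be aware, though, that the paper asserts rather than argues that this run-bounded check captures full containment; your worry that the $\neg\varphi$ side contributes nontrivially is not dispelled there either (e.g.\ a one-state $\A$ accepting everything has all $1$-lassos satisfying $a\leftrightarrow\next a$ without $L(\A)\subseteq a\leftrightarrow\next a$). So your plan and the paper's diverge only at this single step, and you have correctly isolated the place where the argument needs more than either provides.
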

\begin{proof}
	We show that the problem can be encoded by a quantified Boolean formula  with one quantifier alternation (2-QBF) of size polynomial in the length of the LTL formula $\varphi$, and the bounds $k,n$ and $m$. Deciding quantified Boolean formulas in the 2-QBF fragment is in $\Sigma^P_2$ \cite{handbookSatQBF}.
\paragraph{Construction.}
		\begin{flalign*}
	&\exists \{\delta_{s,\alpha,s'} \mid s,s' \in Q, \alpha \in 2^\AP\}.\\
	& \exists \{\mu_{s,c} \mid s \in Q, 0\leq c<m\}& \\
	&\forall\{a_j \mid a \in \AP, 0\leq j < \max\{k,n\}\}.\\
	&\forall\{l_j \mid 0\leq j < \max\{k,n\}\}& \\
	&\forall \{s_j \mid s\in Q, 0\leq j < n\cdot k\}.\\
	&\forall \{r_j \mid 0\leq j < n\cdot k\} &\\
	& \phi^{k,m}_\text{DPA} \wedge (\phi_\text{loop} \rightarrow \phi_{\A \subseteq \varphi} \wedge \phi_{=_n})& 
\end{flalign*}
where \begin{itemize}
	\item $\phi_{\A \subseteq \varphi} \equiv \phi_{k\text{-accrun}} \wedge \phi_{\text{match(k)}}\rightarrow \phi_{\in L_k(\varphi)}$
	\item $\phi_{=_n} \equiv \phi_{\in L_n(\varphi)} \wedge \phi_{\text{match}(n\cdot k)} \rightarrow \phi_{\in L_n(\A)}$
\end{itemize}

	 The formula encodes the existence of a deterministic parity automaton $\A=(Q,q_0,\delta,\mu)$ with $L(\A) \underapprox \varphi$. The transition relation of the automaton is encoded in the variables $\delta_{s,\alpha,s'}$ that define whether the automaton has a transition from state $s\in Q$ to state $s'\in Q$ with a letter $\alpha \in 2^\AP$. Additional variables $\mu_{s,c}$ define the coloring of the states of the guessed automaton. A variable $\mu_{s,c}$ is true if a state $s$ has color $c$. Using the constraint $\phi_{\text{DPA}}^{k,m}$ we force a deterministic transition relation for the automaton and make sure that each state has exactly one color
	
	The relation $\underapprox$ is encoded in the formula $\phi_{\text{loop}} \rightarrow \phi_{\A \subseteq \varphi} \wedge \phi_{=_n}$. To check whether $\A \underapprox \varphi$ we need to check that: (1) $\A$ is a strengthening of $\varphi$, i.e., $\A \subseteq \varphi$, and (2) $\A$ is precise up to ultimately-periodic words of size $n$, i.e., $L_n(A)= L_n(\varphi)$. The strengthening is encoded in the constraint $\phi_{\A \subseteq \varphi} \equiv \phi_{k\text{-accrun}} \wedge \phi_{\text{match(k)}}\rightarrow \phi_{\in L_k(\varphi)}$. To check whether $\A$ is a strengthening of $\varphi$ we need to check that all accepting ultimately-periodic runs of size $k$ of $\A$ induce ultimately-periodic words of size $k$ that satisfy $\varphi$. This is encoded in the formulas $\phi_{k\text{-accrun}}$, $\phi_{\text{match(k)}}$ and $ \phi_{\in L_k(\varphi)}$. 
	The formula $\phi_{\text{match(k)}}$ encodes an ultimately-periodic run in $\A$ of size $k$ using the variables $s_j$ for $0\leq j <k$ which determine which state of the automaton is at each position in the run   and variables $r_j$ which determine the loop of the run. The formula $\phi_{k\text{-accrun}}$ checks whether this run is accepting by checking the highest color in the period of the run.  If both formulas are satisfied then it remains to check whether the induced run satisfies $\varphi$, which is done using the constraint $\phi_{\in L_k(\varphi)}$.
	 The constraint resembles the encoding given in \cite{bmc} for solving the bounded model checking problem. It is defined over the variables $a_j$, where $a\in \AP$ and $0\leq j<k$ and the variables $l_j$ for $0\leq j <k$. A variable $a_j$ is true if the transition at position $j$ in the run that satisfies $\phi_{k\text{-accrun}}$ and  $\phi_{\text{match(k)}}$ represents a letter where $a$ is true and if $\varphi$ allows $a$ to be true at that position. Variables $l_j$ define the position of the loop of the ultimately-periodic word induced by the run.
	
	If $\A $ satisfies the strengthening condition it remains to check whether $\A$ accepts all ultimately-periodic words of size $n$ that satisfy $\varphi$. This condition is encoded in the constraint $\phi_{=_n} \equiv \phi_{\in L_n(\varphi)} \wedge \phi_{\text{match}(n\cdot k)} \rightarrow \phi_{\in L_n(\A)}$. If an ultimately-periodic word  of size $n$ encoded by the variables $a_j$ for $0\leq j< n$ and loop variables $l_j$ satisfies $\varphi$ (checked by the formula $\phi_{\in L_n(\varphi)}$), then we match this ultimately-periodic word to its run in $\A$ (using the formula $\phi_{\text{match}(n\cdot k)}$). Notice that we have to match the word to a run in $\A$ of size $n \cdot k$ as words of length $n$ might induce runs of size $n\cdot k$. If the latter formulas are satisfied it remains to check whether the run in the automaton is accepting. 
	 
	Finally, the formula $\phi_\text{loop}$ asserts that only one loop is allowed at a time. All formulas are of size polynomial in $k$, $n$, $m$ and $\varphi$. \qed
\end{proof}

The construction above can also be used for computing nondeterministic automata by changing the constraints on the transition relation of the automaton.

\section{Discussion}
\label{sec:conclusion}

The key idea behind algorithmic methods like bounded model
checking~\cite{bmc} and bounded synthesis~\cite{boundedSynthesis} is that, for finite-state systems, it
suffices to consider lasso-shaped executions of bounded size. The
notion of $n$-lasso-precise approximation, introduced in this paper,
exploits the same observation for the construction and transformation
of automata. 

The new constructions for $n$-lasso-precise underapproximations have attractive properties.
Theorem~\ref{thm:safety} shows that it is possible to approximate a
given language with a deterministic safety automaton whose size is
exponential in the bound, but independent of the given language. For
small bounds, any language can thus be effectively approximated by a
deterministic safety automaton.  Theorem~\ref{thm:fewercolors} shows that reducing
the number of colors of a parity automaton incurs at most a polynomial
increase in the number of states of the original automaton.


The results indicate significant potential for new verification and
synthesis algorithms that work with $n$-lasso-precise approximations
instead of precise automata. A key novelty is that our constructions
allow us to approximate a given temporal property with a property of a simpler type without necessarily reducing all the way to safety. For example, we can approximate a given temporal
property with a parity automaton with three colors, for which efficient synthesis algorithms exist~\cite{BLOEM2012911}.

The constructions of the paper allow us to directly construct automata
for the approximations. An interesting topic for future work is to
complement these constructions with fast techniques that reduce the
number of states of an automaton without necessarily producing a
minimal automaton. Similar techniques, which, however, guarantee full
language equivalence rather than $n$-lasso precision, are commonly
used in the translation of LTL formulas to B\"uchi automata
(cf.~\cite{LTLtoBuchi}).

\bibliographystyle{plain}
\bibliography{bib.bib}

\end{document}